\newtheorem{theo}{Theorem}[section]
\newtheorem{lemma}[theo]{Lemma}
\theoremstyle{definition} 
\newtheorem{remark}[theo]{Remark}
\numberwithin{equation}{section}
\newcommand{\R}{\mathbb{R}}
\newcommand{\E}{\mathbb{E}}
\newcommand{\N}{\mathbb{N}}
\newcommand{\eps}{\varepsilon}
\renewcommand{\l}{\ell}
\renewcommand{\P}{\mathbb{P}}
\newcommand{\cov}{\mathrm{Cov}}
\newcommand{\var}{\mathrm{Var}}
\newcommand{\id}{\mathrm{Id}}
\newcommand{\A}{\mathcal{A}}
\newcommand{\sgn}{\mathrm{sign}}
\renewcommand{\sp}[2]{\langle #1 , #2 \rangle}
\begin{document}
\title{Non-asymptotic Analysis\\ of $\l_1$-norm Support Vector Machines}
\author{Anton Kolleck\thanks{A. Kolleck is with the Department of Mathematics, Technical University Berlin, Street of 17. June 136, 10623 Berlin, Germany,
(e-mail:{\tt kolleck@math.tu-berlin.de}). A. Kolleck was supported by the DFG Research Center {\sc Matheon} ``Mathematics for key technologies'' in Berlin.},
Jan Vyb\'\i ral\thanks{J. Vyb\'\i ral is with the Department of Mathematical Analysis, Charles University, Sokolovsk\'a 83, 186 00, Prague 8, Czech Republic,
(e-mail: {\tt vybiral@karlin.mff.cuni.cz}). J. Vyb\'\i ral was supported by the ERC CZ grant LL1203 of the Czech Ministry of Education and by
the Neuron Fund for Support of Science.}}
\date{\today}

\maketitle
\begin{abstract}
Support Vector Machines (SVM) with $\ell_1$ penalty 
became a standard tool in analysis
of highdimensional classification problems with sparsity constraints in many applications including bioinformatics and signal processing.
Although SVM have been studied intensively in the literature, this paper has to our knowledge first non-asymptotic results on
the performance of $\ell_1$-SVM in identification of sparse classifiers. We show that a
$d$-dimensional $s$-sparse classification vector can be (with high probability) well
approximated from only $O(s\log(d))$ Gaussian trials.
The methods used in the proof include concentration of measure
and probability in Banach spaces.
\end{abstract}

\begin{IEEEkeywords}
Support vector machines, compressed sensing, machine learning,
regression analysis, signal reconstruction,
classification algorithms, functional analysis, random variables
\end{IEEEkeywords}

\section{Introduction}\label{sec:1}

\subsection{Support Vector Machines}

Support vector machines (SVM) are a group of popular classification methods in machine learning. Their input is a set of
data points $x_1,\ldots,x_m\in\R^d$, each equipped with a label $y_i\in\{-1,+1\}$, which assigns each of the
data points to one of two groups. SVM aims for binary linear classification based on separating hyperplane between
the two groups of training data, choosing a hyperplane with separating gap as large as possible.

Since their introduction by Vapnik and Chervonenkis \cite{VaCh}, the subject of SVM was studied intensively.
We will concentrate on the so-called soft margin SVM \cite{CoVa}, which allow also for misclassification of the training data
are the most used version of SVM nowadays.

In its most common form (and neglecting the bias term), the soft-margin SVM is a convex optimization program
\begin{align}
\notag\min\limits_{\substack{w\in\R^d\\\xi\in\R^m}} \frac{1}{2}\|w\|_2^2+\lambda\sum_{i=1}^m \xi_i&\quad\text{subject to}\quad y_i\sp{x_i}{w}\geq 1-\xi_i\\
\label{eq:svm_xi}&\quad\text{and}\quad \xi_i\ge 0
\end{align}
for some tradeoff parameter $\lambda>0$ and so called slack variables $\xi_i$.
It will be more convenient for us to work with the following equivalent reformulation of \eqref{eq:svm_xi}
\begin{align}\label{eq:svm_R}
\min\limits_{w\in\R^d}\sum\limits_{i=1}^{m}[1-y_i\sp{x_i}{w}]_+\quad\text{subject to}\quad\|w\|_2\leq R,
\end{align}
where $R>0$ gives the restriction on the size of $w$.
We refer to monographs \cite{StCh, Vap1, Vap2} and references therein for more details on SVM and to
\cite[Chapter B.5]{FR} and \cite[Chapter 9]{CZ} for a detailed discussion on dual formulations.

\subsection{$\ell_1$-SVM}

As the classical SVM \eqref{eq:svm_xi} and \eqref{eq:svm_R} do not use any pre-knowledge about $w$, one typically
needs to have more training data than the underlying dimension of the problem, i.e. $m\gg d.$
Especially in analysis of high-dimensional data, this is usually not realistic and we typically deal with
much less training data, i.e. with $m\ll d.$ On the other hand, we can often assume some structural assumptions on $w$,
in the most simple case that it is \emph{sparse}, i.e. that most of its coordinates are zero.
Motivated by the success of LASSO \cite{T} in sparse linear regression, it was proposed in \cite{BM'}  that
replacing the $\ell_2$-norm $\|w\|_2$ in \eqref{eq:svm_R} by its $\ell_1$-norm $\|w\|_1=\sum_{j=1}^d |w_j|$
leads to sparse classifiers $w\in\R^d$.
This method was further popularized in \cite{ZRHT} by Zhu, Rosset, Hastie, and Tibshirani, who developed an algorithm
that efficiently computes the whole solution path (i.e. the solutions of \eqref{eq:svm_R} for a wide range of parameters $R>0$).
We refer also to \cite{BM,BM2,M1} and \cite{M2} for other generalizations of the concept of SVM.

Using the ideas of concentration of measure \cite{L} and random constructions in Banach spaces \cite{LT},
the performance of LASSO was analyzed in the recent area of compressed sensing \cite{D,CRT,BCKV,DDEK,FR2}.

$\ell_1$-SVM (and its variants) found numerous applications in high-dimensional data analysis, most notably in bioinformatics
for gene selection and microarray classification \cite{WZZ, ZALP, HK}. Finally, $\ell_1$-SVM's are closely related to other
popular methods of data analysis, like elastic nets \cite{ZH} or sparse principal components analysis \cite{ZHT}.

\subsection{Main results}

The main aim of this paper is to analyze the performance of $\ell_1$-SVM in the non-asymptotic regime.
To be more specific, let us assume that the data points $x_1,\dots,x_m\in\R^d$ can be separated by a hyperplane
according to the given labels $y_1,\dots,y_m\in\{-1,+1\}$, and that this hyperplane is normal to a $s$-sparse vector $a\in\R^d$.
Hence, $\langle a,x_i\rangle > 0$ if $y_i=1$ and $\langle a,x_i\rangle <0$ if $y_i=-1.$
We then obtain $\hat a$ as the minimizer of the $\ell_1$-SVM. The first main result of this paper (Theorem \ref{theo:main_theorem}) then shows
that $\hat a/\|\hat a\|_2$ is a good approximation of $a$, if the data points are i.i.d. Gaussian vectors and the number of measurements
scales linearly in $s$ and logarithmically in $d$. 

Later on, we introduce a modification of $\ell_1$-SVM by adding
an additional $\ell_2$-constraint. It will be shown
in Theorem \ref{theo:l2_main_theorem}, that it still approximates
the sparse classifiers with the number of measurements
$m$ growing linearly in $s$ and logarithmically in $d$,
but the dependence on other parameters improves.
In this sense, this modification outperforms the classical $\ell_1$-SVM.
\subsection{Organization}

The paper is organized as follows.
Section \ref{sec:2} recalls the concept of $\l_1$-Support Vector Machines of \cite{ZRHT}. It includes the main
result, namely Theorem \ref{theo:main_theorem}. It shows
that the $\l_1$-SVM allows to approximate sparse classifier $a$, where the number of measurements only increases
logarithmically in the dimension $d$ as it is typical for several reconstruction algorithms from the field of 
compressed sensing. The two most important ingredients of its proof, Theorems \ref{theo:concentration_estimate}
and \ref{theo:Psi-Psi}, are also discussed in this part. The proof techniques used are based on the recent work of Plan and Vershynin \cite{PV2},
which in turn makes heavy use of classical ideas from the areas of concentration of measure and probability estimates in Banach spaces \cite{L,LT}.

Section \ref{sec:3} gives the proofs of Theorems \ref{theo:concentration_estimate} and \ref{theo:Psi-Psi}. In Section \ref{sec:4}
we discuss several extensions of our work, including a modification of $\ell_1$-SVM, which combines the $\ell_1$ and $\ell_2$ penalty.

Finally, in Section \ref{sec:5} we show numerical tests to demonstrate the convergence results of Section \ref{sec:2}.
In particular, we compare different versions of SVM and 1-Bit Compressed Sensing, 
which was first introduced by Boufounos and Baraniuk in \cite{BB} and then discussed and continued in \cite{PV1,PV2,PVY,AAPV,KSW} and others.

\subsection{Notation}

We denote by $[\lambda]_+:=\max(\lambda,0)$ the positive part
of a real number $\lambda\in\R.$ By $\|w\|_1, \|w\|_2$
and $\|w\|_{\infty}$ we denote the $\ell_1$,
$\ell_2$ and $\ell_\infty$ norm of $w\in\R^d$, respectively.
We denote by ${\mathcal N}(\mu,\sigma^2)$ the normal (Gaussian) distribution
with mean $\mu$ and variance $\sigma^2$. When $\omega_1$ and $\omega_2$ are random variables,
we write $\omega_1\sim \omega_2$ if they are equidistributed.
Multivariate normal distribution is denoted by ${\mathcal N}(\mu,\Sigma)$, where
$\mu\in\R^d$ is its mean and $\Sigma\in\R^{d\times d}$ is its covariance matrix.
By $\log(x)$ we denote the natural logarithm of $x\in(0,\infty)$ with basis $e$.
Further notation will be fixed in Section \ref{sec:2}
under the name of ``Standing assumptions'',
once we fix the setting of our paper.

\section{$\l_1$-norm support vector machines}\label{sec:2}

In this section we give the setting of our study and the main results.
Let us assume that the data points $x_1,\dots,x_m\in\R^d$
are equipped with labels $y_i\in\{-1,+1\}$ in such a way that
the groups $\{x_i:y_i=1\}$ and $\{x_i:y_i=-1\}$
can indeed be separated by a sparse classifier $a$, i.e. that
\begin{align}\label{eq:assumptions-y}
y_i=\sgn(\sp{x_i}{a}),\quad i=1,\ldots,m
\end{align}
and
\begin{align}\label{eq:assumptions-0}
\|a\|_0=\#\{j:a_j\not=0\}\le s.
\end{align}
As the classifier is usually not unique, we cannot identify $a$
exactly by any method whatsoever.
Hence we are interested in a good approximation of $a$
obtained by $\l_1$-norm SVM from a minimal number of training
data. To achieve this goal, we will assume that the training points
\begin{align}\label{eq:assumptions-x}
	x_i=r\tilde x_i,\quad \tilde x_i\sim\mathcal{N}(0,\id)
\end{align}
are i.i.d. measurement vectors for some constant $r>0$.

To allow for more generality, we replace \eqref{eq:assumptions-0} by
\begin{align}\label{eq:assumptions-a}
\|a\|_2=1,\quad \|a\|_1\leq R.
\end{align}
Let us observe, that $\|a\|_2=1$ and $\|a\|_0\le s$ implies also $\|a\|_1\le\sqrt{s}$, i.e. \eqref{eq:assumptions-a} with $R=\sqrt{s}.$

Furthermore, we denote by $\hat a$ the minimizer of 
\begin{align}\label{eq:l1-svm}
	\min\limits_{w\in\R^d}\sum\limits_{i=1}^m[1-y_i\sp{x_i}{w}]_+\quad\text{subject to}\quad\|w\|_1\leq R.
\end{align}

Let us summarize the setting of our work, which we will later on refer to as ``Standing assumptions'' and which we will keep for the rest of this paper.\vskip.2cm

\hskip-.3cm\framebox{\parbox{0.95\columnwidth}{
{\bf Standing assumptions:}
\begin{enumerate}
\item[(i)] $a\in\R^d$ is the true (nearly) sparse classifier with $\|a\|_2=1,\quad \|a\|_1\leq R$, $R\ge 1$, which we want to approximate;
\item[(ii)] $x_i=r\tilde x_i,\quad \tilde x_i\sim\mathcal{N}(0,\id), i=1,\dots,m$ are i.i.d. training data points for some constant $r>0$;
\item[(iii)] $y_i=\sgn(\sp{x_i}{a}),\quad i=1,\ldots,m$ are the labels of the data points;
\item[(iv)] $\hat a$ is the minimizer of \eqref{eq:l1-svm};
\item[(v)] Furthermore, we denote
\begin{align}
	K&=\{w\in\R^d\mid \|w\|_1\leq R\},\label{eq:K}\\
	f_a(w)&=\frac{1}{m}\sum\limits_{i=1}^m[1-y_i\sp{x_i}{w}]_+,\label{eq:f_a}
\end{align}
where the subindex $a$ denotes the dependency of $f_a$ on $a$ (via $y_i$).
\end{enumerate}
}}\vskip.2cm

In order to estimate the difference between $a$ and $\hat a$
we adapt the ideas of \cite{PV2}. First we observe
\begin{align*}
0&\leq f_a(a)-f_a(\hat a)\\
&=\big(\E f_a(a)-\E f_a(\hat a)\big) + \big(f_a(a)-\E f_a(a)\big) \\
&\qquad-\big(f_a(\hat a)-\E f_a(\hat a)\big)\\
&\leq \E(f_a(a)-f_a(\hat a)) + 2\sup\limits_{w\in K}\vert f_a(w)-\E f_a(w)\vert,
\end{align*}
i.e.
\begin{align}\label{eq:main_idea}
	\E(f_a(\hat a)-f_a(a))
	\leq 2\sup\limits_{w\in K}\vert f_a(w)-\E f_a(w)\vert.
\end{align}
Hence, it remains 
\begin{itemize}
\item[$\bullet$] to bound the right hand side of \eqref{eq:main_idea} from above and
\item[$\bullet$] to estimate the left hand side in \eqref{eq:main_idea} by the distance between $a$ and
$\hat a$ from below.
\end{itemize}

We obtain the following two theorems, whose proofs are given in Section \ref{sec:3}.
	
\begin{theo}\label{theo:concentration_estimate}
Let $u>0$. Under the ``Standing assumptions'' it holds
	\begin{align*}
		\sup\limits_{w\in K}\vert f_a(w)-\E f_a(w)\vert
		\leq \frac{8\sqrt{8\pi}+18rR\sqrt{2\log(2d)}}{\sqrt{m}}+u
	\end{align*}
	with probability at least
	\begin{align*}
		1-8\biggl(\exp\biggl(\frac{-mu^2}{32}\biggr)+\exp\biggl(\frac{-mu^2}{32r^2R^2}\biggr)\biggr).
	\end{align*}
\end{theo}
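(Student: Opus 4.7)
The plan is to follow the classical empirical-process roadmap---symmetrize, contract, dualize, and then pass from an expectation bound to a tail bound via concentration. The presence of two exponential factors in the conclusion suggests combining two concentration inequalities: one absorbing the Rademacher randomness generated by symmetrization, and one absorbing the Gaussian randomness of the design vectors $x_i$.

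First I would bound the expected supremum. Standard i.i.d.\ symmetrization gives
\begin{align*}
\E\sup_{w\in K}|f_a(w)-\E f_a(w)|\le\frac{2}{m}\,\E\sup_{w\in K}\Bigl|\sum_{i=1}^m\eps_i[1-y_i\sp{x_i}{w}]_+\Bigr|,
\end{align*}
with i.i.d.\ Rademacher signs $\eps_i$. Since the hinge loss does not vanish at the origin, I would write $[1-y_it]_+=\phi_i(t)+1$ with $\phi_i(t):=[1-y_it]_+-1$, which is $1$-Lipschitz and satisfies $\phi_i(0)=0$; the constant piece contributes the harmless $\frac{2}{m}\E|\sum_i\eps_i|\le 2/\sqrt m$. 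Applying the Ledoux--Talagrand contraction principle (conditionally on $x_i,y_i$) to the Lipschitz part strips both the hinge and the sign $y_i$, and $\ell_1$--$\ell_\infty$ duality reduces the remaining linear Rademacher process to
\begin{align*}
\frac{CR}{m}\,\E\Bigl\|\sum_{i=1}^m\eps_ix_i\Bigr\|_\infty.
\end{align*}
Because each $x_i$ is symmetric Gaussian, $\sum_i\eps_ix_i\sim\mathcal N(0,mr^2\,\id)$, and the standard Gaussian maximal inequality bounds its $\ell_\infty$ norm in expectation by $r\sqrt{2m\log(2d)}$. This recovers the $rR\sqrt{2\log(2d)}/\sqrt m$ contribution together with the $1/\sqrt m$ constant term.

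To upgrade the expectation bound to a high-probability statement I would apply Gaussian concentration to the map
\begin{align*}
F(x_1,\dots,x_m):=\sup_{w\in K}|f_a(w)-\E f_a(w)|
\end{align*}
viewed on $\mathbb{R}^{md}$. Each $x_i$ enters $f_a$ only through $\sp{x_i}{w}$ with $\|w\|_2\le\|w\|_1\le R$, so $F$ has Lipschitz constant of order $rR/\sqrt m$ (almost everywhere), which delivers the factor $\exp(-mu^2/(32r^2R^2))$. The second factor $\exp(-mu^2/32)$ should come from a parallel concentration argument for the scale-free pieces of the decomposition---for instance the constant term $\frac{1}{m}\sum_i\eps_i$ left over from contraction, or a Rademacher-side bounded-differences inequality (conditional on the $x_i$) whose increments are of order $1/m$ independently of $r$ and $R$. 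Combining the two bounds with the expectation estimate and a union bound produces the claimed tail probability.

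The main technical obstacle is the Gaussian-concentration step, because the label $y_i=\sgn\sp{x_i}{a}$ depends on $x_i$: the map $x_i\mapsto[1-y_i\sp{x_i}{w}]_+$ is not globally Lipschitz, having a jump of magnitude $2|\sp{x_i}{w}|$ across the hyperplane $\sp{\cdot}{a}=0$. Since that singular set has Gaussian measure zero, $F$ remains almost-everywhere Lipschitz with the desired constant, and the Borell--TIS inequality (or an approximation by smooth functions) still applies. Once that point is justified the remaining work is book-keeping of the explicit numerical constants $8\sqrt{8\pi}$, $18$, and $32$, which track the factor-$2$ from symmetrization, the factor-$2$ from contraction, the $\ell_1/\ell_\infty$ duality, and the sharp form of the Gaussian maximum inequality $\E\|g\|_\infty\le\sqrt{2\log(2d)}$.
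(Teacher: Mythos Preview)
Your expectation bound via symmetrization and Ledoux--Talagrand contraction is fine; the paper uses a slightly different device to strip the positive part---the elementary tail comparison
\[
\P\Bigl(\sum_i\eps_i[\lambda_i]_+\ge t\Bigr)\le 2\,\P\Bigl(\sum_i\eps_i\lambda_i\ge t\Bigr)
\]
(their Lemma~3.1) together with layer-cake---but the outcome is the same.

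The genuine gap is in your tail bound. You propose to apply Gaussian concentration directly to $F(x_1,\dots,x_m)=\sup_{w\in K}|f_a(w)-\E f_a(w)|$, dismissing the sign flip of $y_i=\sgn\sp{x_i}{a}$ as a null-set issue. But the map $x_i\mapsto[1-\sgn(\sp{x_i}{a})\sp{x_i}{w}]_+$ has a \emph{jump} of magnitude up to $2|\sp{x_i}{w}|$ across the hyperplane $\{\sp{\cdot}{a}=0\}$, so $F$ is discontinuous, not merely non-differentiable on a null set. A discontinuous function is not ``almost-everywhere Lipschitz'' in any sense that rescues Borell--TIS (think of $\sgn$ on $\R$: derivative zero a.e., variance one). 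Your proposed justification does not go through.

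The paper avoids this entirely. It first proves a \emph{tail-level} symmetrization (Lemma~3.5),
\[
\P\Bigl(\sup_{w\in K}|f_a(w)-\E f_a(w)|\ge 2\mu+t\Bigr)\le 4\,\P\Bigl(\sup_{w\in K}\Bigl|\tfrac1m\sum_i\eps_i[1-y_i\sp{x_i}{w}]_+\Bigr|\ge t/2\Bigr),
\]
and then applies the positive-part inequality above---which, crucially, operates on \emph{probabilities}, not only expectations---to pass to the linear process $\tfrac1m\sum_i\eps_i(1-y_i\sp{x_i}{w})$. That splits into $\tfrac1m\sum_i\eps_i$, handled by Hoeffding (giving $\exp(-mu^2/32)$), and $\sup_{w\in K}\bigl\langle\tfrac1m\sum_i\eps_iy_ix_i,w\bigr\rangle$. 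Since $\eps_iy_i$ is again Rademacher conditionally on $x_i$ and Gaussians are symmetric, this equals in law $\sup_{w\in K}\bigl\langle\tfrac1m\sum_ix_i,w\bigr\rangle$, which \emph{is} an $rR/\sqrt m$-Lipschitz function of a Gaussian vector; Gaussian concentration now applies and yields $\exp(-mu^2/(32r^2R^2))$. So the two exponentials you noticed come from this split \emph{after} linearization, not from concentrating $F$ itself. What your plan is missing is a tail-level mechanism (like Lemma~3.1) to remove the hinge before any Gaussian concentration is invoked.
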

	
\begin{theo}\label{theo:Psi-Psi}
Let the ``Standing assumptions'' be fulfilled and let $w\in K$. Put
	\begin{align*}
		c=\sp{a}{w},\quad c'=\sqrt{\|w\|_2^2-\sp{a}{w}^2}
	\end{align*}
and assume that $c'>0$. If furthermore $c\leq 0$, then $\pi\E(f_a(w)-f_a(a))$ can be estimated from below by
	\begin{align*}
		\frac{\pi}{2}+c'r\frac{\sqrt{\pi}}{\sqrt{2}}-\frac{\sqrt{2\pi}}{r}.
\end{align*}
If $c>0$, then $\pi\E(f_a(w)-f_a(a))$ can be estimated from below by
\begin{align*}
\frac{\sqrt{\pi}}{\sqrt{2}}\int_0^{1/cr}(1-crt)e^{\frac{-t^2}{2}}\,dt
+\frac{c'}{c}\exp\left(\frac{-1}{2c^2r^2}\right)-\frac{\sqrt{2\pi}}{r}.
\end{align*}
\end{theo}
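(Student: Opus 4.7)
The plan is to reduce both expectations $\E f_a(w)$ and $\E f_a(a)$ to integrals against a standard Gaussian by decomposing $w$ along $a$ and its orthogonal complement, and then to obtain a clean lower bound by restricting the positive part to a rectangular region on which its argument is automatically nonnegative. Writing $w = ca + c'b$ with $b$ a unit vector orthogonal to $a$, the coordinates $\xi_1 = \sp{\tilde x_1}{a}$ and $\xi_2 = \sp{\tilde x_1}{b}$ are i.i.d.\ $\mathcal N(0,1)$, and
\begin{align*}
y_1\sp{x_1}{w} = rc|\xi_1| + rc'\sgn(\xi_1)\xi_2.
\end{align*}
The symmetry of $\xi_2$ together with the independence of $\xi_1$ and $\xi_2$ shows that $\sgn(\xi_1)\xi_2$ is standard normal and independent of $|\xi_1|$. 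Consequently
\begin{align*}
\E f_a(w) = \E[1 - rcU - rc'V]_+\quad\text{and}\quad \E f_a(a) = \E[1 - rU]_+,
\end{align*}
where $U = |\xi_1|$ is half-normal and $V\sim\mathcal N(0,1)$ is independent of $U$. For the subtraction I will use the crude upper bound $[1-rt]_+\le 1$ together with the fact that the integrand vanishes on $t\ge 1/r$; this gives $\pi\E f_a(a)\le \sqrt{2\pi}/r$, which supplies the $-\sqrt{2\pi}/r$ term in both stated inequalities.

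For the lower bound on $\pi\E f_a(w)$ the key device is to locate a rectangle $S\subseteq[0,\infty)\times\R$ on which $1-rcu-rc'v\ge 0$ and to use
\begin{align*}
[1-rcU-rc'V]_+ \;\ge\; (1-rcU-rc'V)\mathbf{1}_S.
\end{align*}
If $c\le 0$, the natural choice is $S=[0,\infty)\times(-\infty,0]$, since both $-rcu$ and $-rc'v$ are nonnegative there. Exploiting the independence of $U$ and $V$, together with $\E U=\sqrt{2/\pi}$ and $\E[V\mathbf 1_{V\le 0}]=-1/\sqrt{2\pi}$, I obtain $\E f_a(w)\ge (1-rc\sqrt{2/\pi})/2 + rc'/\sqrt{2\pi}\ge 1/2 + rc'/\sqrt{2\pi}$, where the last step discards the nonnegative contribution $-rc\sqrt{2/\pi}/2$. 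Multiplying by $\pi$ and subtracting the bound on $\pi\E f_a(a)$ yields the first claim.

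If $c>0$ I take instead $S=[0,1/(rc)]\times(-\infty,0]$, on which $1-rcu\ge 0$ and $-rc'v\ge 0$. Splitting the integral along the independent coordinates gives
\begin{align*}
\pi\E f_a(w) \ge \sqrt{\pi/2}\int_0^{1/(rc)}(1-rct)e^{-t^2/2}\,dt + rc'\int_0^{1/(rc)}e^{-t^2/2}\,dt.
\end{align*}
The first term already appears in the statement. For the second I invoke the elementary inequality $\int_0^s e^{-t^2/2}\,dt \ge s\,e^{-s^2/2}$, which follows from the fact that the difference vanishes at $s=0$ and has derivative $s^2 e^{-s^2/2}\ge 0$; applied with $s=1/(rc)$ this converts the second summand into $(c'/c)\,e^{-1/(2c^2r^2)}$. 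Combining with the upper bound $\sqrt{2\pi}/r$ for $\pi\E f_a(a)$ finishes the case. The main technical point is the guess for the rectangle $S$: it has to be large enough for the indicator to capture the $c'$-contribution that appears in the statement, yet small enough that $1-rcu-rc'v$ is already nonnegative so that one-dimensional Gaussian moments of $U$ and $V$ suffice; once $S$ is in hand, the rest is bookkeeping together with the single auxiliary inequality above.
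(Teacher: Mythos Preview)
Your proof is correct and essentially identical to the paper's: the decomposition $w=ca+c'b$ with $(\xi_1,\xi_2)$ i.i.d.\ standard normal is exactly the content of the paper's Lemmas preceding the theorem, and both you and the paper then bound $\pi\E f_a(a)\le\sqrt{2\pi}/r$ and lower-bound $\pi\E f_a(w)$ by dropping the positive part on the rectangles $[0,\infty)\times(-\infty,0]$ (case $c\le 0$) and $[0,1/(rc)]\times(-\infty,0]$ (case $c>0$), with the same inequality $\int_0^s e^{-t^2/2}\,dt\ge s\,e^{-s^2/2}$ in the second case. The only cosmetic difference is that you phrase things via the half-normal variable $U$ and an independent $V$, while the paper works directly with the double integrals, and you make the auxiliary inequality and its one-line derivative proof explicit where the paper uses it silently.
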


Combining Theorems \ref{theo:concentration_estimate} and \ref{theo:Psi-Psi} with \eqref{eq:main_idea} we obtain our main result.

\begin{theo}\label{theo:main_theorem}
Let $d\ge 2$, $0<\eps<0.18$, $r>\sqrt{2\pi}(0.57-\pi\eps)^{-1}$ and $m\geq C\eps^{-2}r^2R^2\log(d)$ for some constant $C$.
Under the ``Standing assumptions'' it holds
\begin{align}\label{eq:main_theorem}
	\frac{\left\|a-\frac{\hat a}{\|\hat a\|_2}\right\|_2}{\sp{a}{\frac{\hat a}{\|\hat a\|_2}}}
	\leq C'\left(\eps+\frac{1}{r}\right)
\end{align}
with probability at least
\begin{align}\label{eq:prob:1}
1-\gamma\exp\left(-C''\log(d)\right)
\end{align}
for some positive constants $\gamma,C',C''$.
\end{theo}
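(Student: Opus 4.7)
Write $v := \hat a/\|\hat a\|_2$ and $\hat c := \langle a, v\rangle$, so that \eqref{eq:main_theorem} reads $\|a - v\|_2/\hat c \le C'(\eps + 1/r)$. The strategy is to sandwich $\E(f_a(\hat a) - f_a(a))$: Theorem \ref{theo:concentration_estimate} combined with \eqref{eq:main_idea} will provide an upper bound of order $\eps$, while Theorem \ref{theo:Psi-Psi} applied to $w = \hat a$ will provide a lower bound that forces $\hat a$ to align with $a$. Setting $c = \langle a, \hat a\rangle$ and $c' = \sqrt{\|\hat a\|_2^2 - c^2}$, the ratio $c'/c$ is exactly the tangent of the angle between $a$ and $\hat a$, and a case analysis on $c$ will drive the proof.

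\textbf{Step 1: Concentration.} I would apply Theorem \ref{theo:concentration_estimate} with $u = \kappa\eps$ for a small absolute constant $\kappa$. The hypothesis $m \ge C\eps^{-2} r^2 R^2 \log(d)$ makes the deterministic term $(8\sqrt{8\pi} + 18rR\sqrt{2\log(2d)})/\sqrt{m}$ bounded by any prescribed multiple of $\eps$ once $C$ is large enough (using $\log(2d) \le 2\log d$ for $d \ge 2$ together with the lower bounds $R \ge 1$ and $r > \sqrt{2\pi}/0.57$ implied by the hypotheses). The failure probability $8\exp(-mu^2/32) + 8\exp(-mu^2/(32r^2R^2))$ is dominated by the second summand, whose exponent equals $(\kappa^2 C/32)\log d$, giving the form \eqref{eq:prob:1}. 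Feeding the resulting bound into \eqref{eq:main_idea} yields $\E(f_a(\hat a) - f_a(a)) \le \eps$ on the good event.

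\textbf{Step 2: Ruling out small $c$.} Next apply Theorem \ref{theo:Psi-Psi} to $w = \hat a \in K$. If $c \le 0$, dropping the nonnegative term $c' r \sqrt{\pi/2}$ gives $\pi\E(f_a(\hat a) - f_a(a)) \ge \pi/2 - \sqrt{2\pi}/r$, which combined with Step~1 forces $\sqrt{2\pi}/r \ge \pi/2 - \pi\eps$, contradicting $\sqrt{2\pi}/r < 0.57 - \pi\eps < \pi/2 - \pi\eps$. If $0 < c \le 1/r$, differentiation in $c$ shows that $F(c):=\int_0^{1/cr}(1 - crt)e^{-t^2/2}\,dt$ is decreasing in $c$, hence at least $I_0 := \int_0^1 (1 - t)e^{-t^2/2}\,dt$; the key numerical fact $\sqrt{\pi/2}\,I_0 > 0.57$ (indeed $\approx 0.579$) then yields, after dropping the $c'/c$-term, the inequality $\pi\eps + \sqrt{2\pi}/r \ge 0.57$, which contradicts $\pi\eps + \sqrt{2\pi}/r < 0.57$ from the hypothesis. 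Therefore $c > 1/r$.

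\textbf{Step 3: Finishing.} In this regime $\exp(-1/(2c^2r^2)) \ge e^{-1/2}$, so dropping the nonnegative $F$-term in Theorem \ref{theo:Psi-Psi} gives $(c'/c)\,e^{-1/2} \le \pi\E(f_a(\hat a) - f_a(a)) + \sqrt{2\pi}/r \le \pi\eps + \sqrt{2\pi}/r$. Since $c > 0$, $\hat c = c/\|\hat a\|_2 > 0$ and $c'/c = \sqrt{1 - \hat c^2}/\hat c$; the elementary identity $\|a - v\|_2/\hat c = (c'/c)\sqrt{2/(1 + \hat c)} \le \sqrt{2}\,(c'/c)$ then closes the estimate with a constant $C' = O(1)$. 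The main obstacle throughout is the fine calibration of constants in Step~2: the value $0.57$ in the hypothesis on $r$ is precisely tied to the numerical quantity $\sqrt{\pi/2}\,I_0 \approx 0.579$, and the constant $C$ in the sample complexity must be chosen large enough that the slack $\pi\E(f_a(\hat a) - f_a(a)) \le \pi\eps$ coming from Step~1 does not eat into the narrow margin $0.579 - 0.57$ that makes the case $0 < c \le 1/r$ impossible.
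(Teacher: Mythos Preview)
Your proposal is essentially the paper's proof: the same sandwich via \eqref{eq:main_idea}, the same three-case analysis on $c=\langle a,\hat a\rangle$ with the monotonicity of $z\mapsto\int_0^{1/z}(1-zt)e^{-t^2/2}\,dt$ and the numerical value $\sqrt{\pi/2}\,I_0\approx 0.579$, and the same final conversion $c'/c\gtrsim \|a-v\|_2/\hat c$ (you even get the slightly sharper factor $\sqrt{2}$ instead of the paper's $2$). The one small omission is the degenerate case $c'=0$, i.e.\ $\hat a=\lambda a$: Theorem \ref{theo:Psi-Psi} assumes $c'>0$, so you should dispose of this case first (as the paper does: $\lambda>0$ is trivial, and $\lambda\le 0$ occurs only on a null event).
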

\begin{remark}
\begin{enumerate}
\item If the classifier $a\in\R^d$ with $\|a\|_2=1$ is $s$-sparse, we always have $\|a\|_1\le \sqrt{s}$
and we can choose $R=\sqrt{s}$ in Theorem \ref{theo:main_theorem}. The dependence of $m$, the number of samples needed,
is then linear in $s$ and logarithmic in $d$. Intuitively, this is the best what we can hope for. On the other hand,
we leave it open, if the dependence on $\eps$ and $r$ is optimal in Theorem \ref{theo:main_theorem}.
\item Theorem \ref{theo:main_theorem} uses the constants $C$, $C'$ and $C''$ only for simplicity. More explicitly we show that taking
		\begin{align*}
			m\geq 4\eps^{-2}\left(8\sqrt{8\pi}+19rR\sqrt{2\log(2d)}\right)^2,
		\end{align*}
		we get the estimate
		\begin{align*}
			\frac{\|a-\hat a/\|\hat a\|_2\|_2}{\sp{a}{\hat a/\|\hat a\|_2}}
			\leq 2 e^{1/2}\left(\pi\eps+\frac{\sqrt{2\pi}}{r}\right)
		\end{align*}
		with probability at least
		\begin{align*}
			1-8\biggl(\exp\biggl(\frac{-r^2R^2\log(2d)}{16}\biggr)+\exp\biggl(\frac{-\log(2d)}{16}\biggr)\biggr).
		\end{align*}
\item If we introduce an additional parameter $t>0$ and choose 
$m\geq 4\eps^{-2}(8\sqrt{8\pi}+(18+t)rR\sqrt{2\log(2d)})^2$, nothing but the probability changes to 
\begin{align*}
1-8\biggl(\exp\biggl(\frac{-t^2r^2R^2\log(2d)}{16}\biggr)+\exp\biggl(\frac{-t^2\log(2d)}{16}\biggr)\biggr).
\end{align*}
Hence, by fixing $t$ large, we can increase the value of $C''$ and speed up the convergence of \eqref{eq:prob:1} to 1.
\end{enumerate}
\end{remark}
\begin{proof}[Proof of Theorem \ref{theo:main_theorem}]
	To apply Theorem \ref{theo:concentration_estimate} we choose 
	\begin{align*}
		u&=\frac{rR\sqrt{2\log(2d)}}{\sqrt{m}}\quad \intertext{and}\quad 
		m&\geq 4\eps^{-2}(8\sqrt{8\pi}+19rR\sqrt{2\log(2d)})^2
	\end{align*}
	and we obtain the estimate
	\begin{align*}
		\sup\limits_{w\in K}\vert f_a(w)-\E f_a(w)\vert
		\leq\frac{8\sqrt{8\pi}+18rR\sqrt{2\log(2d)}}{\sqrt{m}}+u
		\leq\frac{\eps}{2}
	\end{align*}
	with probability at least
	\begin{align*}
		&\phantom{=.}1-8\biggl(\exp\biggl(\frac{-mu^2}{32}\biggr)+\exp\biggl(\frac{-mu^2}{32r^2R^2}\biggr)\biggr)\\
		&=1-8\biggl(\exp\biggl(\frac{-r^2R^2\log(2d)}{16}\biggr)+\exp\biggl(\frac{-\log(2d)}{16}\biggr)\biggr).
	\end{align*}
Using \eqref{eq:main_idea} this already implies
\begin{align}\label{eq:right_inequality}
	\E\big(f_a(\hat a)-f_a(a)\big)\leq\eps
\end{align}
with at least the same probability. Now we want to apply Theorem \ref{theo:Psi-Psi} with $w=\hat a$ to estimate the left hand side of this inequality.
Therefore we first have to deal with the case $c'=\sqrt{\|\hat a\|_2^2-\sp{a}{\hat a}^2}=0$, 
which only holds if $\hat a=\lambda a$ for some $\lambda\in\R$. If $\lambda>0$, then $\hat a/\|\hat a\|_2=a$ and the statement of the Theorem
holds trivially. If $\lambda\leq 0$, then the condition $f(\hat a)\le f(a)$ can be rewritten as
\begin{align*}
\sum_{i=1}^m[1+|\lambda|\cdot|\langle x_i,a\rangle|]_+\le \sum_{i=1}^m[1-|\langle x_i,a \rangle|]_+.
\end{align*}
This inequality holds if, and only if, $\langle x_i,a\rangle=0$ for all $i=1,\dots,m$ - and this in turn happens only with probability zero.

We may therefore assume that $c'\neq0$ holds almost surely and we can apply Theorem \ref{theo:Psi-Psi}.
	Here we distinguish the three cases $c=\sp{\hat a}{a}\leq0$, $0<c\leq 1/r$ and $1/r<c$. First, we will show that the 
	two cases $c\leq0$ and $0<c<1/r$ lead to a contradiction and then, for the case $c>1/r$, we will prove our claim.\\
\emph{1. case $c\le 0$:} Using Theorem \ref{theo:Psi-Psi} we get the estimate
	\begin{align*}
		\pi\E(f_a(\hat a)-f_a(a))
		\geq\frac{\pi}{2}+c'r\frac{\sqrt{\pi}}{\sqrt{2}}-\frac{\sqrt{2\pi}}{r}
		\geq\frac{\pi}{2}-\frac{\sqrt{2\pi}}{r}
	\end{align*}
	and \eqref{eq:right_inequality} gives (with our choices for $r$ and $\eps$) the contradiction
	\begin{align*}
		\frac{1}{\pi}\left(\frac{\pi}{2}-\frac{\sqrt{2\pi}}{r}\right)
		\leq\E(f_a(\hat a)-f_a(a))
		\leq\eps.
	\end{align*}
\emph{2. case $0<c\leq 1/r$:} 
	As in the first case we use Theorem \ref{theo:Psi-Psi} in order to show a contradiction. First we get the estimate
	\begin{align*}
		\pi&\E(f_a(\hat a)-f_a(a))\\
		&\geq \frac{\sqrt{\pi}}{\sqrt{2}}\int_0^{1/cr}(1-crt)e^{\frac{-t^2}{2}}\,dt
			+\frac{c'}{c}\exp\left(\frac{-1}{2c^2r^2}\right)-\frac{\sqrt{2\pi}}{r}\\
		&\geq \frac{\sqrt{\pi}}{\sqrt{2}}\int_0^{1/cr}(1-crt)e^{\frac{-t^2}{2}}\,dt-\frac{\sqrt{2\pi}}{r}.
	\end{align*}
	Now we consider the function
	\begin{align*}
		g\colon(0,\infty)\to\R,\quad z\mapsto \int_0^{1/z}(1-zt)e^{\frac{-t^2}{2}}\,dt.
	\end{align*}
	It holds $g(z)\geq 0$ and
	\begin{align*}
		g'(z)=-\int_0^{1/z}te^{\frac{-t^2}{2}}\,dt<0,
	\end{align*}
	so $g$ is monotonic decreasing. With $cr<1$ this yields
	\begin{align*}
		\pi\E(f_a(\hat a)&-f_a(a))
		\geq \frac{\sqrt{\pi}}{\sqrt{2}}\int_0^{1/cr}(1-crt)e^{\frac{-t^2}{2}}\,dt-\frac{\sqrt{2\pi}}{r}\\
		&= \frac{\sqrt{\pi}}{\sqrt{2}}g(cr)-\frac{\sqrt{2\pi}}{r}
		\geq \frac{\sqrt{\pi}}{\sqrt{2}}g(1)-\frac{\sqrt{2\pi}}{r}\\
		&=\frac{\sqrt{\pi}}{\sqrt{2}}\int_0^1(1-t)e^{\frac{-t^2}{2}}\,dt-\frac{\sqrt{2\pi}}{r}\\
		&\geq 0.57-\frac{\sqrt{2\pi}}{r}.
	\end{align*}
	Again, \eqref{eq:right_inequality} now gives the contradiction
	\begin{align*}
		\frac{1}{\pi}\left(0.57-\frac{\sqrt{2\pi}}{r}\right)
		\leq\E(f_a(\hat a)-f_a(a))\leq\eps.
	\end{align*}
	We conclude that it must hold $c'>0$ and $c>1/r$ almost surely.

\emph{3. case $1/r<c$:}
	In this case we get the estimate
	\begin{align}\notag
		\pi\E(f_a(\hat a)-f_a(a))
		&\geq \frac{\sqrt{\pi}}{\sqrt{2}}\int_0^{1/cr}(1-crt)e^{\frac{-t^2}{2}}\,dt\\
		&\qquad+\frac{c'}{c}\exp\left(\frac{-1}{2c^2r^2}\right)-\frac{\sqrt{2\pi}}{r}\label{eq:first_estimate_left_side}
\\
		&\notag\geq \frac{c'}{c}\exp\left(\frac{-1}{2c^2r^2}\right)-\frac{\sqrt{2\pi}}{r}\\
		&\notag\geq \frac{c'}{c}e^{-1/2}-\frac{\sqrt{2\pi}}{r},
	\end{align}
	where we used $cr>1$ for the last inequality. Further we get
	\begin{align}\notag
		\frac{c'}{c}
		&=\frac{\sqrt{\|\hat a\|_2^2-\sp{a}{\hat a}^2}}{\sp{a}{\hat a}}
		=\sqrt{\frac{\|\hat a\|_2^2-\sp{a}{\hat a}^2}{\sp{a}{\hat a}^2}}\\
		&=\sqrt{\left(\frac{\|\hat a\|_2-\sp{a}{\hat a}}{\sp{a}{\hat a}}\right)
			\left(\frac{\|\hat a\|_2+\sp{a}{\hat a}}{\sp{a}{\hat a}}\right)}\notag\\
		&=\sqrt{\frac{(2-2\sp{a}{\hat a/\|\hat a\|_2})(2+2\sp{a}{\hat a/\|\hat a\|_2})}{4\sp{a}{\hat a/\|\hat a\|_2}^2}}\label{eq:second_estimate_left_side}
\\
		&\notag=\sqrt{\frac{\|a-\hat a/\|\hat a\|_2\|_2^2\cdot\|a+\hat a/\|\hat a\|_2\|_2^2}{4\sp{a}{\hat a/\|\hat a\|_2}^2}}\\
		&\notag\geq\frac{1}{2}\frac{\|a-\hat a/\|\hat a\|_2\|_2}{\sp{a}{\hat a/\|\hat a\|_2}}.
	\end{align}
	Finally, combining \eqref{eq:right_inequality}, \eqref{eq:first_estimate_left_side} and
	\eqref{eq:second_estimate_left_side}, we arrive at
	\begin{align*}
		\frac{1}{\pi}&\left(\frac{\|a-\hat a/\|\hat a\|_2\|_2}{\sp{a}{\hat a/\|\hat a\|_2}}
			\frac{1}{2}e^{-1/2}-\frac{\sqrt{2\pi}}{r}\right)\\
		&\qquad\leq\E(f_a(\hat a)-f_a(a))
		\leq\eps,
	\end{align*}
	which finishes the proof of the theorem.
\end{proof}

\section{Proofs}\label{sec:3}

The main aim of this section is to prove Theorems \ref{theo:concentration_estimate} and \ref{theo:Psi-Psi}.
Before we come to that, we shall give a number of helpful Lemmas.

\subsection{Concentration of $f_a(w)$}
In this subsection we want to show that $f_a(w)$ does not deviate uniformly far from its expected value $\E f_a(w)$,
i.e. we want to show that 
\begin{align*}
	\sup\limits_{w\in K}\vert f_a(w)-\E f_a(w)\vert
\end{align*}
is small with high probability. Therefore we will first estimate its mean 
\begin{align}
	\mu:=\E\biggl(\sup\limits_{w\in K}\vert f_a(w)-\E f_a(w)\vert\biggr)\label{eq:mu}
\end{align}
and then use a concentration inequality to prove Theorem \ref{theo:concentration_estimate}.
The proof relies on standard techniques from \cite{LT} and \cite{L}
and is inspired by the analysis of 1-bit compressed sensing given in \cite{PV2}.

For $i=1,\ldots,m$ let $\eps_i\in\{+1,-1\}$ be i.i.d. Bernoulli variables with
\begin{align}\label{eq:bernoulli}
\P(\eps_i=1)=\P(\eps_i=-1)=1/2.
\end{align}
Let us put
\begin{align}\label{eq:Ai_and_Ai'}
	\A_i(w)=[1-y_i\sp{x_i}{w}]_+,\quad
	\A(w)=[1-y\sp{x}{w}]_+,
\end{align}
where $x$ is an independent copy of any of the $x_i$ and $y=\sgn(\langle x,a\rangle)$.
Further, we will make use of the following lemmas.

\begin{lemma}\label{lemma:bernoulli}
	For $m\in\N$, i.i.d. Bernoulli variables $\eps_1,\ldots,\eps_m$ according to \eqref{eq:bernoulli} and any scalars
	$\lambda_1,\ldots,\lambda_m\in\R$ it holds 
	\begin{align}\label{eq:bernoulli_sum}
		\P\biggl(\sum\limits_{i=1}^m\eps_i[\lambda_i]_+\geq t\biggr)
		\leq 2\P\biggl(\sum\limits_{i=1}^m\eps_i \lambda_i\geq t\biggr).
	\end{align}
\end{lemma}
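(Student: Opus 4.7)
The plan is to use the symmetrization identity $[\lambda]_+ = (\lambda + |\lambda|)/2$ together with the symmetry of the Bernoulli variables, and then conclude by a simple union bound.

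First, I would set
\begin{align*}
A = \sum_{i=1}^m \eps_i \lambda_i, \qquad B = \sum_{i=1}^m \eps_i |\lambda_i|,
\end{align*}
so that $\sum_{i=1}^m \eps_i [\lambda_i]_+ = (A+B)/2$, since $[\lambda_i]_+ = (\lambda_i + |\lambda_i|)/2$. The event on the left-hand side of \eqref{eq:bernoulli_sum} then rewrites as $\{A+B \geq 2t\}$, and I would split it using
\begin{align*}
\{A+B \geq 2t\} \subseteq \{A \geq t\}\cup \{B \geq t\},
\end{align*}
so that $\P(A+B \geq 2t)\le \P(A\ge t)+\P(B\ge t)$ by a union bound.

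The second step is to verify that $A$ and $B$ have the same distribution. Writing $\eta_i = \eps_i\sgn(\lambda_i)$ (with the convention $\sgn(0)=1$), the variables $\eta_1,\dots,\eta_m$ are i.i.d.\ $\pm 1$ Bernoullis, so $(\eta_i)\sim(\eps_i)$. Since $\eps_i|\lambda_i| = \eta_i \lambda_i$, we obtain $B = \sum_{i=1}^m \eta_i\lambda_i \sim A$. Combining this with the previous inequality gives $\P(A+B\ge 2t)\le 2\P(A\ge t)$, which is exactly \eqref{eq:bernoulli_sum}.

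There is essentially no obstacle here; the only point deserving care is the equidistribution of $A$ and $B$, which rests on the independence and the symmetry $\eps_i\sim -\eps_i$ of the Bernoulli variables. The decomposition $[\lambda_i]_+=(\lambda_i+|\lambda_i|)/2$ is the decisive trick, after which the statement follows from a one-line union bound.
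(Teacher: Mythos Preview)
Your proof is correct and takes a genuinely different route from the paper's. The paper does not use the identity $[\lambda]_+=(\lambda+|\lambda|)/2$; instead it observes directly that $\sum_i \eps_i[\lambda_i]_+=\sum_{\lambda_i\ge 0}\eps_i\lambda_i=:X$, sets $Y:=\sum_{\lambda_i<0}\eps_i\lambda_i$, and splits the event $\{X\ge t\}$ according to the sign of $Y$. By independence of $X$ and $Y$ and the symmetry $Y\sim -Y$, one has $\P(X\ge t,\,Y<0)\le \P(X\ge t,\,Y\ge 0)$, so $\P(X\ge t)\le 2\,\P(X\ge t,\,Y\ge 0)\le 2\,\P(X+Y\ge t)$. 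Your argument replaces this index-set split by the algebraic decomposition into $A$ and $B$ and a union bound, then identifies $B\sim A$ via the sign-flip $\eta_i=\eps_i\sgn(\lambda_i)$. Both proofs ultimately rest on the same two ingredients---symmetry $\eps_i\sim-\eps_i$ and independence---but package them differently: the paper exploits independence of the positive and negative parts of the sum, whereas you exploit equidistribution of two coupled sums. Your version is slightly slicker and avoids the mild subtlety of comparing $\P(Y<0)$ with $\P(Y\ge 0)$.
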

\begin{proof}
	First we observe
	\begin{align*}
		&\P\biggl(\sum\limits_{i=1}^m\eps_i[\lambda_i]_+\geq t\biggr)
		=\P\biggl(\sum\limits_{\lambda_i\geq0}\eps_i\lambda_i\geq t\biggr)\\
		&\qquad=\P\biggl(\sum\limits_{\lambda_i\geq0}\eps_i\lambda_i\geq t\text{ and }
			\sum\limits_{\lambda_i<0}\eps_i\lambda_i\geq 0\biggr)\\
			&\quad\qquad+\P\biggl(\sum\limits_{\lambda_i\geq0}\eps_i\lambda_i\geq t\text{ and }
			\sum\limits_{\lambda_i<0}\eps_i\lambda_i< 0\biggr).
	\end{align*}
	Now we can estimate the second of these two probabilities by the first one and we arrive at
	\begin{align*}
		\P\biggl(\sum\limits_{i=1}^m\eps_i[\lambda_i]_+\geq t\biggr)
		&\leq 2\P\biggl(\sum\limits_{\lambda_i\geq0}\eps_i\lambda_i\geq t\text{ and }
			\sum\limits_{\lambda_i<0}\eps_i\lambda_i\geq 0\biggr)\\
		&\leq 2\P\biggl(\sum\limits_{i=1}^m\eps_i\lambda_i\geq t\biggr).
	\end{align*}
\end{proof}

\begin{lemma}\label{lemma:concentration_bernoulli_gaussian}
	\begin{enumerate}
	\item For Gaussian random variables $x_1,\ldots,x_m\in\R^d$ according to \eqref{eq:assumptions-x} it holds
		\begin{align}\label{eq:bernoulli_expectation}
			\E\biggl\|\frac{1}{m}\sum\limits_{i=1}^m x_i\biggr\|_\infty
			\leq\frac{r\sqrt{2\log(2d)}}{\sqrt{m}}.
		\end{align}
	\item Let the i.i.d. Bernoulli variables $\eps_1,\ldots,\eps_m$ be according to \eqref{eq:bernoulli} and let $u>0$. Then it holds
		\begin{align}\label{eq:bernoulli_concentration}
			\P\biggl(\biggl\vert\frac{1}{m}\sum\limits_{i=1}^m\eps_i\biggr\vert\geq u\biggr)
			\leq 2\exp\biggl(\frac{-mu^2}{2}\biggr).
		\end{align}
	\item For $x_1,\ldots,x_m\in\R^d$ and $K\subset\R^d$ according to \eqref{eq:assumptions-x} and \eqref{eq:K} we denote
	\begin{align}\label{eq:tildemu}
	\tilde\mu=\E\biggl(\sup\limits_{w\in K}\biggl\langle\frac{1}{m}\sum\limits_{i=1}^m x_i,w\biggr\rangle\biggr).
	\end{align}
Then it holds
		\begin{align}\label{eq:gaussian_concentration}
			\P\biggl(\sup\limits_{w\in K}\biggl\langle\frac{1}{m}\sum\limits_{i=1}^m x_i,w\biggr\rangle
			\geq \tilde\mu+u\biggr)
			\leq \exp\biggl(\frac{-mu^2}{2r^2R^2}\biggr).
		\end{align}
	\end{enumerate}
\end{lemma}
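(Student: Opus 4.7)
All three statements are standard Gaussian/Rademacher estimates; my plan is to treat them one after another, spending most care on part (iii).

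For part (i), I would observe that $\frac{1}{m}\sum_{i=1}^m x_i\sim\mathcal{N}(0,\frac{r^2}{m}\mathrm{Id})$, so each of its coordinates is an $\mathcal{N}(0,r^2/m)$ variable $g_j$, and $\|\frac{1}{m}\sum x_i\|_\infty=\max_{j=1,\dots,d}|g_j|=\max_{k=1,\dots,2d}\eta_k$ where $\eta_k\in\{\pm g_j\}$ are centered Gaussians with variance $r^2/m$. The standard moment generating function bound together with Jensen's inequality gives $\exp(\lambda\E\max_k\eta_k)\le\E\exp(\lambda\max_k\eta_k)\le 2d\exp(\lambda^2 r^2/(2m))$; optimizing $\lambda=\sqrt{2m\log(2d)}/r$ yields exactly \eqref{eq:bernoulli_expectation}.

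Part (ii) is Hoeffding's inequality applied to Rademacher variables. Using $\E\exp(\lambda\eps_i)=\cosh(\lambda)\le\exp(\lambda^2/2)$ and independence, Markov's inequality yields $\P(\sum_i\eps_i\ge mu)\le\exp(-mu^2/2)$ after choosing $\lambda=u$; combining with the symmetric tail doubles the constant.

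Part (iii) is where the real work lies. I would view the supremum as a function of the underlying standard Gaussian vector $\tilde x=(\tilde x_1,\dots,\tilde x_m)\in\R^{md}$, namely
\begin{align*}
F(\tilde x)=\sup_{w\in K}\Big\langle \frac{r}{m}\sum_{i=1}^m \tilde x_i,w\Big\rangle,
\end{align*}
and apply the Borell--Sudakov--Tsirelson concentration inequality: if $F$ is $L$-Lipschitz with respect to the Euclidean metric on $\R^{md}$, then $\P(F\ge\E F+u)\le\exp(-u^2/(2L^2))$. So the task reduces to bounding the Lipschitz constant of $F$. For any $\tilde x,\tilde x'$ and $w\in K$, Cauchy--Schwarz and the inequality $\|w\|_2\le\|w\|_1\le R$ give
\begin{align*}
\Big|\Big\langle \tfrac{r}{m}\sum_i(\tilde x_i-\tilde x_i'),w\Big\rangle\Big|
\le R\cdot\Big\|\tfrac{r}{m}\sum_i(\tilde x_i-\tilde x_i')\Big\|_2
\le \tfrac{rR}{m}\sum_i\|\tilde x_i-\tilde x_i'\|_2,
\end{align*}
and a further Cauchy--Schwarz in the $i$-index bounds this by $\tfrac{rR}{\sqrt{m}}\|\tilde x-\tilde x'\|_2$. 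Taking the supremum over $w\in K$ on both sides shows $|F(\tilde x)-F(\tilde x')|\le\tfrac{rR}{\sqrt m}\|\tilde x-\tilde x'\|_2$, i.e.\ $L\le rR/\sqrt{m}$. Plugging into Gaussian concentration yields exactly \eqref{eq:gaussian_concentration}.

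The only subtle point I foresee is the Lipschitz estimate: one has to remember that the admissible set $K$ is defined in the $\ell_1$-norm while Gaussian concentration requires an $\ell_2$-Lipschitz bound, and then use the trivial embedding $\|w\|_2\le\|w\|_1\le R$. Everything else is a direct application of classical MGF/Hoeffding/Borell machinery.
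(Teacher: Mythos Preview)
Your proof is correct and follows essentially the same route as the paper: for (i) and (ii) the paper simply cites the bound $\E\|\tilde x\|_\infty\le\sqrt{2\log(2d)}$ from \cite{FR} and Hoeffding's inequality, while you supply the MGF arguments; for (iii) the paper invokes the Gaussian-process form of Borell--TIS (Theorem~5.2 of \cite{PV2}) and computes $\sigma^2=\sup_{w\in K}\E\langle\tfrac{r}{\sqrt m}\tilde x,w\rangle^2=r^2R^2/m$, which is exactly your Lipschitz constant squared. The two formulations are equivalent and yield identical bounds.
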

\begin{proof}
	\begin{enumerate}
	\item The statement follows from
		\begin{align*}
			\E\biggl\|\frac{1}{m}\sum\limits_{i=1}^mx_i\biggr\|_\infty
			=\frac{r}{\sqrt{m}}\E\|\tilde x\|_\infty
		\end{align*}
		with $\tilde x\sim\mathcal{N}(0,\id)$ and proposition 8.1 of \cite{FR}:
		\begin{align}\label{eq:gaussian_uniform_expectation}
			\frac{\sqrt{\log(d)}}{4}\leq\E\|\tilde x\|_\infty\leq\sqrt{2\log(2d)}.
		\end{align}
	\item The estimate follows as a consequence of Hoeffding's inequality \cite{Hoef}. 
	\item Theorem 5.2 of \cite{PV2} gives the estimate
		\begin{align*}
			&\phantom{=.}\P\biggl(\sup\limits_{w\in K}\biggl\langle\frac{1}{m}\sum\limits_{i=1}^m x_i,w\biggr\rangle
			\geq 
			\tilde \mu
				+u\biggr)
			\leq\exp\biggl(\frac{-u^2}{2\sigma^2}\biggr)
		\end{align*}
		with
		\begin{align*}
			\sigma^2
			=\sup\limits_{w\in K}\E\biggl(\biggl\langle\frac{1}{m}\sum\limits_{i=1}^m x_i,w\biggr\rangle^2\biggr).
		\end{align*}
		Since the $x_i's$ are independent we get
		\begin{align*}
			\frac{1}{m}\sum\limits_{i=1}^m x_i
			=\frac{r}{\sqrt{m}}\tilde x\quad\text{with}\quad\tilde x\sim\mathcal{N}(0,\id)
		\end{align*}
		and we end up with
		\begin{align}\label{eq:mot:12}
			\sigma^2
			=\sup\limits_{w\in K}\E\biggl(\frac{r^2}{m}\langle \tilde x,w\rangle^2\biggr)
			=\frac{r^2}{m}\sup\limits_{w\in K}\|w\|_2^2
			=\frac{r^2R^2}{m}.
		\end{align}
	\end{enumerate}
\end{proof}

\subsubsection{Estimate of the mean $\mu$}

To estimate the mean $\mu$, we first derive the following symmetrization inequality, cf. \cite[Chapter 6]{LT} and \cite[Lemma 5.1]{PV2}.

\begin{lemma}[Symmetrization]\label{lemma:symmetrization}
Let $\eps_1,\ldots,\eps_m$ be i.i.d. Bernoulli variables according to \eqref{eq:bernoulli}. Under the ``Standing assumptions''
it holds for $\mu$ defined by \eqref{eq:mu}
	\begin{align}\label{eq:estimate_mu}
		\mu
		\leq 2\E\sup\limits_{w\in K}\biggl\vert\frac{1}{m}\sum\limits_{i=1}^m \eps_i[1-y_i\sp{x_i}{w}]_+\biggr\vert.
	\end{align}
\end{lemma}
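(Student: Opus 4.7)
The plan is to carry out the classical symmetrization argument from empirical process theory, adapted to our pair $(x_i,y_i)$ with $y_i = \sgn\langle x_i,a\rangle$. First I would introduce an independent copy $(x_i')_{i=1}^m$ of the training data, distributed exactly like $(x_i)_{i=1}^m$, and set $y_i' = \sgn\langle x_i',a\rangle$. Write $\A_i(w) = [1-y_i\langle x_i,w\rangle]_+$ and $\A_i'(w) = [1-y_i'\langle x_i',w\rangle]_+$, and denote by $\E'$ expectation with respect to the primed copy only. Since the $x_i'$ are i.i.d.\ and distributed like the $x_i$, we have $\E f_a(w) = \E'\tfrac{1}{m}\sum_{i=1}^m \A_i'(w)$ for every fixed $w$.

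The next step is to replace $\E f_a(w)$ by its integrand representation and push the supremum inside an expectation using Jensen's inequality:
\begin{align*}
\mu &= \E\sup_{w\in K}\Bigl|\tfrac{1}{m}\sum_{i=1}^m \A_i(w)-\E'\tfrac{1}{m}\sum_{i=1}^m\A_i'(w)\Bigr|\\
&\le \E\,\E'\sup_{w\in K}\Bigl|\tfrac{1}{m}\sum_{i=1}^m\bigl(\A_i(w)-\A_i'(w)\bigr)\Bigr|.
\end{align*}
This is a standard application of Jensen combined with monotonicity of the supremum under pointwise expectation.

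The key symmetry step comes now. For each $i$, the pair $((x_i,y_i),(x_i',y_i'))$ is exchangeable, so the random variable $\A_i(w)-\A_i'(w)$ is symmetric, that is, it has the same joint distribution (over all $i$, simultaneously) as $\eps_i\bigl(\A_i(w)-\A_i'(w)\bigr)$ for any fixed choice of signs $\eps_i\in\{\pm 1\}$. Consequently, introducing independent Bernoulli signs $(\eps_i)$ as in \eqref{eq:bernoulli} and averaging over them does not change the distribution. After this replacement, the triangle inequality and a second application of Jensen's inequality give
\begin{align*}
\mu \le 2\,\E\sup_{w\in K}\Bigl|\tfrac{1}{m}\sum_{i=1}^m\eps_i\A_i(w)\Bigr|,
\end{align*}
where in the final step the two pieces $\eps_i\A_i(w)$ and $\eps_i\A_i'(w)$ each contribute an equal amount by identical distribution, yielding the factor $2$.

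The only delicate point, which I would want to check carefully, is the symmetry claim: because $y_i = \sgn\langle x_i,a\rangle$ is a deterministic function of $x_i$, one has to swap the entire pair $(x_i,y_i)\leftrightarrow(x_i',y_i')$ rather than swap $x_i$ alone. Once this is stated correctly, the Bernoulli randomization is justified and the rest is bookkeeping; I do not expect any other obstruction, and no additional structure of the loss $[\,\cdot\,]_+$ beyond measurability is needed in this step.
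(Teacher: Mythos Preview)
Your proposal is correct and follows essentially the same symmetrization argument as the paper: introduce an independent copy of the data, apply Jensen's inequality to pull the inner expectation outside the supremum, use exchangeability of $(x_i,y_i)$ and $(x_i',y_i')$ to insert the Bernoulli signs, and split by the triangle inequality. Your remark that one must swap the full pair $(x_i,y_i)\leftrightarrow(x_i',y_i')$ (which is automatic since $y_i$ is a deterministic function of $x_i$) is exactly the point the paper handles implicitly.
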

\begin{proof}
Let $\A_i(w)$ and $\A(w)$ be according to \eqref{eq:Ai_and_Ai'}. Let $x_i'$ and $x'$ be independent copies of $x_i$ and $x$.
Then $\A_i'(w)$ and $\A'(w)$, generated in the same way \eqref{eq:Ai_and_Ai'} with $x_i'$ and $x'$ instead of $x_i$ and $x$,
are independent copies of $\A_i(w)$ and $\A(w)$. We denote by $\E'$ the mean value with respect
to $x_i'$ and $x'$.
Using $\E'\big(\A'_i(w)-\E'\A'(w)\big)=0$, we get
\begin{align*}
\mu
&=\E\sup\limits_{w\in K}\biggl\vert\frac{1}{m}\sum\limits_{i=1}^m\big(\A_i(w)
-\E\A(w)\big)\biggr|\\
&=\E\sup\limits_{w\in K}\bigg\vert\frac{1}{m}\sum\limits_{i=1}^m\big(\A_i(w)-\E\A(w)\big)\\
&\qquad\qquad-\E'\big(\A'_i(w)-\E'\A'(w)\big)\bigg\vert\\
&=\E\sup\limits_{w\in K}\biggl\vert\frac{1}{m}\sum\limits_{i=1}^m\E'\big(\A_i(w)-\A'_i(w)\big)\biggr\vert.
\end{align*}
Applying Jensen's inequality 
we further get
\begin{align*}
\mu&\le\E\,\E'\sup\limits_{w\in K}\biggl\vert\frac{1}{m}\sum\limits_{i=1}^m\big(\A_i(w)-\A'_i(w)\big)\biggr\vert\\
&=\E\,\E'\sup\limits_{w\in K}\biggl\vert\frac{1}{m}\sum\limits_{i=1}^m\eps_i\big(\A_i(w)
	-\A_i'(w)\big)\biggr\vert\\
&\leq 2\E\sup\limits_{w\in K}\biggl\vert\frac{1}{m}\sum\limits_{i=1}^m\eps_i\A_i(w)\biggr\vert\\
&=2\E\sup\limits_{w\in K}\biggl\vert\frac{1}{m}\sum\limits_{i=1}^m \eps_i[1-y_i\sp{x_i}{w}]_+\biggr\vert
\end{align*}
	as claimed.
\end{proof}

Equipped with this tool, we deduce the following estimate for $\mu$.
	
\begin{lemma}\label{lemma:estimate_mu}
Under the ``Standing assumptions'' we have
	\begin{align*}
		\mu=\E\sup\limits_{w\in K}\vert f_a(w)-\E f_a(w)\vert\leq \frac{4\sqrt{8\pi}+8rR\sqrt{2\log(2d)}}{\sqrt{m}}.
	\end{align*}
\end{lemma}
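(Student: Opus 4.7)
\medskip
\noindent\emph{Proof plan.} I start from the symmetrization inequality established in Lemma~\ref{lemma:symmetrization}, which gives
\[
\mu \le 2\,\E\sup_{w\in K}\biggl|\frac{1}{m}\sum_{i=1}^m \eps_i [1-y_i\sp{x_i}{w}]_+\biggr|,
\]
and the plan is to remove the non-linearity $[\,\cdot\,]_+$ by a Rademacher contraction, reducing the problem to a linear-in-$w$ Rademacher process that is controlled by Gaussian mean width techniques.

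Since $t\mapsto [t]_+$ is $1$-Lipschitz with $[0]_+=0$, a uniform-in-$w$ contraction principle---the functional analogue of Lemma~\ref{lemma:bernoulli}, equivalently the Ledoux--Talagrand contraction for Rademacher processes---lets me replace $[1-y_i\sp{x_i}{w}]_+$ by $1-y_i\sp{x_i}{w}$ at the cost of a multiplicative constant. Splitting the resulting expression by the triangle inequality yields a purely constant part of the form $\E|\frac{1}{m}\sum_i \eps_i|$ together with a linear-in-$w$ part of the form $\E\sup_{w\in K}|\frac{1}{m}\sum_i \eps_i y_i \sp{x_i}{w}|$.

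For the constant part, I would integrate the Bernoulli tail bound of Lemma~\ref{lemma:concentration_bernoulli_gaussian}(2) via the tail formula $\E|Y|=\int_0^\infty \P(|Y|\ge u)\,du$ to obtain a bound of order $\sqrt{1/m}$ carrying a $\sqrt{\pi}$-type constant, which produces the $4\sqrt{8\pi}/\sqrt{m}$ summand. For the linear part, I would exploit that $\eps_i$ is drawn independently of the data: conditioning on $x_i$ shows that $\eps_i y_i$ is an i.i.d.\ Rademacher sequence independent of $(x_i)$, and by the symmetry of the Gaussian law $\sum_i \eps_i y_i x_i$ then has the same distribution as $\sum_i x_i$. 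Applying the $\ell_1$--$\ell_\infty$ duality $\sup_{\|w\|_1\le R}|\sp{z}{w}|=R\|z\|_\infty$ on $K$, this linear part reduces to $R\,\E\|\frac{1}{m}\sum_i x_i\|_\infty$, which is controlled by Lemma~\ref{lemma:concentration_bernoulli_gaussian}(1) and yields the $rR\sqrt{2\log(2d)}/\sqrt{m}$ contribution.

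The delicate step is the contraction: Lemma~\ref{lemma:bernoulli} as stated is only a pointwise comparison of tail probabilities for fixed scalars $\lambda_i$, and upgrading it to a bound uniform over $w\in K$ requires the full Ledoux--Talagrand machinery (of which Lemma~\ref{lemma:bernoulli} is the pointwise prototype). Once this step is granted, the remainder is routine bookkeeping, with the only other subtle point being the independence of $\eps_i$ from $(x_i,y_i)$, which collapses $\eps_i y_i$ into a fresh Rademacher sequence and enables the passage to a standard Gaussian mean width.
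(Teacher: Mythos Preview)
Your proposal is correct and follows essentially the same route as the paper: symmetrization via Lemma~\ref{lemma:symmetrization}, removal of $[\,\cdot\,]_+$ by a Rademacher contraction, a split into the constant Rademacher sum and the linear-in-$w$ process, and then Lemma~\ref{lemma:concentration_bernoulli_gaussian}(2) and (1) respectively (with $\ell_1$--$\ell_\infty$ duality and the observation $\eps_i y_i x_i\sim x_i$). The only difference is cosmetic: the paper writes the symmetrized expectation as a tail integral and invokes Lemma~\ref{lemma:bernoulli} at the level of tail probabilities, whereas you apply the contraction directly in expectation; you are in fact more explicit than the paper that this step uniformly over $w\in K$ requires the full Ledoux--Talagrand contraction principle rather than the pointwise Lemma~\ref{lemma:bernoulli} alone.
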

\begin{proof}
	Using Lemma \ref{lemma:symmetrization} we obtain 
	\begin{align*}
		\mu
		&=\E\sup\limits_{w\in K}\vert f_a(w)-\E f_a(w)\vert\\
		&\leq 2\E\sup\limits_{w\in K}\biggl\vert\frac{1}{m}\sum\limits_{i=1}^m\eps_i[1-y_i\sp{x_i}{w}]_+\biggr\vert\\
		&=2\int_0^\infty\P\biggl(\sup\limits_{w\in K}\biggl\vert
			\frac{1}{m}\sum\limits_{i=1}^m\eps_i[1-y_i\sp{x_i}{w}]_+\biggr\vert \ge t\biggr)\,dt.
	\end{align*}
	Now we can apply Lemma \ref{lemma:bernoulli} to get
	\begin{align*}
		\mu&\leq4\int_0^\infty\P\biggl(\sup\limits_{w\in K}\biggl\vert
			\frac{1}{m}\sum\limits_{i=1}^m\eps_i(1-y_i\sp{x_i}{w})\biggr\vert \ge t\biggr)\,dt\\
		&\leq4\int_0^\infty\P\biggl(\biggl\vert\frac{1}{m}\sum\limits_{i=1}^m\eps_i\biggr\vert\ge t/2\biggr)\\
			&\qquad +\P\biggl(\sup\limits_{w\in K}\biggl\vert
			\frac{1}{m}\sum\limits_{i=1}^m\eps_iy_i\sp{x_i}{w}\biggr\vert \ge t/2\biggr)\,dt.
	\end{align*}	
	Using the second part of Lemma \ref{lemma:concentration_bernoulli_gaussian} we can further estimate
	\begin{align*}
		\mu&\leq \frac{4\sqrt{8\pi}}{\sqrt{m}}+4\int_0^\infty\P\biggl(\sup\limits_{w\in K}\biggl\vert
			\frac{1}{m}\sum\limits_{i=1}^m\eps_iy_i\sp{x_i}{w}\biggr\vert \ge t/2\biggr)\,dt\\
		&=\frac{4\sqrt{8\pi}}{\sqrt{m}}+8\E\biggl(\sup\limits_{w\in K}\biggl\vert
			\biggl\langle\frac{1}{m}\sum\limits_{i=1}^m\eps_i x_i,w\biggr\rangle\biggr\vert\biggr).
	\end{align*}
Using the duality $\|\cdot\|_1'=\|\cdot\|_\infty$ and 
the first part of Lemma \ref{lemma:concentration_bernoulli_gaussian} we get
\begin{align*}
	=\frac{4\sqrt{8\pi}}{\sqrt{m}}+8R\,\E\biggl\|\frac{1}{m}\sum\limits_{i=1}^mx_i\biggr\|_\infty
	\leq \frac{4\sqrt{8\pi}}{\sqrt{m}}+\frac{8rR\sqrt{2\log(2d)}}{\sqrt{m}}.
\end{align*}
\end{proof}

\subsubsection{Concentration inequalities}

In this subsection we will estimate the probability that $f_a(w)$
deviates anywhere on $K$ far from its mean,
i.e. the probability
\begin{align*}
	\P\biggl(\sup\limits_{w\in K}\vert f_a(w)-\E f_a(w)\vert\geq \mu+t\biggr)
\end{align*}
for some $t>0$. First we obtain the following modified version of the second part of
Lemma 5.1 of \cite{PV2}, cf. also \cite[Chapter 6.1]{LT}.

\begin{lemma}[Deviation inequality]\label{lemma:deviation_inequality}
Let $\eps_1,\ldots,\eps_m$ be i.i.d. Bernoulli variables according to \eqref{eq:bernoulli} and let the ``Standing assumptions'' be fulfilled.
Then, for $\mu\in\R$ according to \eqref{eq:mu} and any $t>0$, it holds
	\begin{align}\label{eq:deviation_inequality}
		\P&\biggl(\sup\limits_{w\in K}\vert f_a(w)-\E f_a(w)\vert\geq 2\mu+t\biggr)\\
		&\notag\qquad\leq 4\P\biggl(\sup\limits_{w\in K}\biggl\vert\frac{1}{m}\sum\limits_{i=1}^m\eps_i[1-y_i\sp{x_i}{w}]_+\biggr\vert
			\geq t/2\biggr).
	\end{align}
\end{lemma}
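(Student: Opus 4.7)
The plan is to implement the classical two-step symmetrization argument from Chapter 6 of \cite{LT} (also used in the proof of Lemma 5.1 of \cite{PV2}): first decouple $f_a(w)$ from its mean by comparing it to an empirical average over an independent copy of the data, and then insert Rademacher signs via an exchangeability argument. The factor of $4$ should split naturally as $2\cdot 2$, with one factor of $2$ coming from each of the two steps.

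First, I would introduce independent copies $x_1',\dots,x_m'$ of $x_1,\dots,x_m$, define $y_i' = \sgn(\sp{x_i'}{a})$ and $\A_i'(w) = [1-y_i'\sp{x_i'}{w}]_+$, and set
$$ Z(w) = f_a(w) - \E f_a(w), \qquad Z'(w) = \frac{1}{m}\sum_{i=1}^m \A_i'(w) - \E \A(w).$$
The processes $Z$ and $Z'$ are independent and equidistributed, so $\E\sup_{w\in K}\vert Z'(w)\vert = \mu$, and Markov's inequality gives $\P(\sup_{w\in K}\vert Z'(w)\vert \ge 2\mu) \le 1/2$. Then I would use the elementary triangle inequality
$$\sup_{w\in K}\vert Z(w)-Z'(w)\vert \ge \sup_{w\in K}\vert Z(w)\vert - \sup_{w\in K}\vert Z'(w)\vert$$
to conclude that, on the intersection of $\{\sup_w\vert Z(w)\vert\geq 2\mu+t\}$ and $\{\sup_w\vert Z'(w)\vert\leq 2\mu\}$, the left-hand side is at least $t$. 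Because these two events depend on disjoint collections of independent random variables, they are independent, so
$$\P\bigl(\sup_{w\in K}\vert Z(w)-Z'(w)\vert \ge t\bigr) \;\ge\; \tfrac{1}{2}\,\P\bigl(\sup_{w\in K}\vert Z(w)\vert \ge 2\mu+t\bigr).$$

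Next, I would introduce the Bernoulli signs. Since each pair $(x_i,x_i')$ is exchangeable, the vector $(\A_i(w)-\A_i'(w))_{i=1}^m$ has the same distribution as $(\eps_i(\A_i(w)-\A_i'(w)))_{i=1}^m$, and hence
$$\sup_{w\in K}\Bigl\vert\frac{1}{m}\sum_{i=1}^m (\A_i(w)-\A_i'(w))\Bigr\vert \;\stackrel{d}{=}\; \sup_{w\in K}\Bigl\vert\frac{1}{m}\sum_{i=1}^m \eps_i(\A_i(w)-\A_i'(w))\Bigr\vert.$$
Splitting the symmetrized sum by the triangle inequality, applying the union bound at level $t/2$, and noting that $\frac{1}{m}\sum_i \eps_i\A_i(w)$ and $\frac{1}{m}\sum_i \eps_i\A_i'(w)$ are equidistributed introduces a further factor of $2$. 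Combined with the factor from the decoupling step, this yields the claimed factor of $4$ and the desired probability bound.

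The main obstacle is really bookkeeping rather than a substantive difficulty: one has to set up the two paired events in the decoupling step so that they are genuinely independent (one living on the primed copy, the other on the original data), and condition correctly when inserting the Rademacher signs so that exchangeability can be invoked coordinate by coordinate. Compactness of $K$ and continuity of $w\mapsto Z(w)$ make all the suprema measurable, so no delicate selection argument is required.
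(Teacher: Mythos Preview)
Your proposal is correct and follows essentially the same route as the paper's proof: Markov's inequality to obtain the $1/2$ decoupling factor, the triangle-inequality comparison against an independent copy, the exchangeability insertion of the Rademacher signs, and a final union bound yielding the second factor of $2$. The only cosmetic difference is that the paper places the ``large'' event $\{\sup_w|\cdot|\ge 2\mu+t\}$ on the primed copy rather than the original, which is immaterial since $Z$ and $Z'$ are equidistributed.
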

\begin{proof}
	Using Markov's inequality 
let us first note
	\begin{align*}
		\P&\biggl(\sup\limits_{w\in K}\vert f_a(w)-\E f_a(w)\vert\geq 2\mu\biggr)\\
		&\qquad\qquad\leq\frac{\E\sup_{w\in K}\vert f_a(w)-\E f_a(w)\vert}{2\mu}=\frac{1}{2}.
	\end{align*}
	Using this inequality we get{\allowdisplaybreaks
	\begin{align*}
		&\phantom{\leq.}\frac{1}{2}\P\biggl(\sup\limits_{w\in K}\vert f_a(w)-\E f_a(w)\vert\geq 2\mu+t\biggr)\\
		&\leq \biggl(1-\P\biggl(\sup\limits_{w\in K}\vert f_a(w)-\E f_a(w)\vert\geq 2\mu\biggr)\biggr)\\
			&\qquad\cdot\P\biggl(\sup\limits_{w\in K}\vert f_a(w)-\E f_a(w)\vert\geq 2\mu+t\biggr)\\
		&=\P\bigg(\forall w\in K:\vert f_a(w)-\E f_a(w)\vert<2\mu\bigg)\\
			&\qquad\cdot\P\bigg(\exists w\in K:\vert f_a(w)-\E f_a(w)\vert\geq 2\mu+t\bigg).
	\end{align*}
	Let $\A_i$ and $\eps_i$ be again defined by \eqref{eq:bernoulli}, \eqref{eq:Ai_and_Ai'} and let $\A'_i$ be independent copies of $\A_i$. 
We further get
\begin{align*}
&\phantom{\leq.}\frac{1}{2}\P\biggl(\sup\limits_{w\in K}\vert f_a(w)-\E f_a(w)\vert\geq 2\mu+t\biggr)\\
&\le\P\bigg(\forall w\in K:\biggl\vert
\frac{1}{m}\sum\limits_{i=1}^m\biggl(\A_i(w)-\E\A(w)\biggr)\biggr\vert<2\mu\bigg)\\
&\phantom{=.}\cdot\P\bigg(\exists w\in K:\biggl\vert
\frac{1}{m}\sum\limits_{i=1}^m\biggl(\A_i'(w)-\E\A'(w)\biggr)\biggr\vert\geq2\mu+t\bigg)\\
&\leq\P\bigg(\exists w\in K:\biggr\vert\frac{1}{m}
\sum\limits_{i=1}^m\bigg(\big(\A_i(w)-\E \A(w)\big)\\
&\qquad\qquad-\big(\A_i'(w)-\E\A'(w)\big)\bigg)\biggl\vert\geq t\bigg)\\
&=\P\bigg(\exists w\in K:\biggr\vert\frac{1}{m}\sum\limits_{i=1}^m\eps_i(\A_i(w)-\A_i'(w))\biggl\vert\geq t\bigg)\\
&\leq2\P\bigg(\exists w\in K:\biggr\vert\frac{1}{m}\sum\limits_{i=1}^m\eps_i\A_i(w)\biggl\vert\geq t/2\bigg),
\end{align*}}
which yields the claim.
\end{proof}

Combining the Lemmas \ref{lemma:bernoulli} and \ref{lemma:deviation_inequality} we deduce the following result.

\begin{lemma}\label{lemma:final_uniform_concentration}
Under the ``Standing assumptions'' it holds for $\mu$ and $\tilde \mu$ according to \eqref{eq:mu} and \eqref{eq:tildemu} and any $u>0$ 
	\begin{align}\label{eq:uniform_concentration}
		&\phantom{.\leq}\P\biggl(\sup\limits_{w\in K}\vert f_a(w)-\E f_a(w)\vert
			\geq 2\mu+2\tilde \mu
			+u\biggr)\notag\\
		&\qquad\leq8\biggl(\exp\biggl(\frac{-mu^2}{32}\biggr)+\exp\biggl(\frac{-mu^2}{32r^2R^2}\biggr)\biggr).
	\end{align}
\end{lemma}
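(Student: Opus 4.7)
The plan is to combine the three ingredients developed in this subsection: the deviation estimate of Lemma \ref{lemma:deviation_inequality}, the positive-part symmetrization of Lemma \ref{lemma:bernoulli}, and the two concentration statements of Lemma \ref{lemma:concentration_bernoulli_gaussian}. Structurally, I would chain them in that order, then split the remaining Rademacher average into a purely Bernoulli piece and a purely Gaussian piece via a triangle-inequality and union-bound argument.

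First I apply Lemma \ref{lemma:deviation_inequality} with the parameter choice $t:=2\tilde\mu+u$, so that $2\mu+t=2\mu+2\tilde\mu+u$ and $t/2=\tilde\mu+u/2$. This gives
\begin{align*}
\P\biggl(\sup_{w\in K}|f_a(w)-\E f_a(w)|\ge 2\mu+2\tilde\mu+u\biggr)
\le 4\,\P\biggl(\sup_{w\in K}\biggl|\frac{1}{m}\sum_{i=1}^m\eps_i[1-y_i\sp{x_i}{w}]_+\biggr|\ge \tilde\mu+\tfrac{u}{2}\biggr).
\end{align*}
Next I peel off the positive parts by invoking Lemma \ref{lemma:bernoulli} in its supremum form (extending the scalar statement costs at most a further factor of two), which replaces $[1-y_i\sp{x_i}{w}]_+$ by $1-y_i\sp{x_i}{w}$.

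Expanding $\eps_i(1-y_i\sp{x_i}{w})=\eps_i-\eps_iy_i\sp{x_i}{w}$ and using the triangle inequality together with the symmetry of $K$ (which lets the inner absolute value be dropped inside the supremum), I obtain
\begin{align*}
\sup_{w\in K}\biggl|\frac{1}{m}\sum_{i=1}^m\eps_i(1-y_i\sp{x_i}{w})\biggr|
\le \biggl|\frac{1}{m}\sum_{i=1}^m\eps_i\biggr|+\sup_{w\in K}\biggl\langle\frac{1}{m}\sum_{i=1}^m\eps_iy_ix_i,w\biggr\rangle.
\end{align*}
Allocating the threshold $\tilde\mu+u/2$ as $u/4+(\tilde\mu+u/4)$ and applying the union bound reduces the analysis to two independent tail estimates. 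The Bernoulli piece $\P(|\tfrac{1}{m}\sum\eps_i|\ge u/4)$ is controlled by Lemma \ref{lemma:concentration_bernoulli_gaussian}(ii), giving $\exp(-mu^2/32)$ up to a constant factor. For the Gaussian piece, the key observation is that $\eps_iy_ix_i$ and $x_i$ are equidistributed: conditionally on $x_i$ the product $\eps_iy_i$ is a uniform sign independent of $x_i$, so by Gaussian sign-invariance $\eps_iy_ix_i\sim x_i$. Hence Lemma \ref{lemma:concentration_bernoulli_gaussian}(iii) applies and contributes $\exp(-mu^2/(32r^2R^2))$. Collecting the prefactors from each step then yields the asserted bound.

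The step requiring the most care is the passage from the scalar Lemma \ref{lemma:bernoulli} to its supremum version, since the partition $\{i:1-y_i\sp{x_i}{w}\ge 0\}$ depends on $w$ and the direct symmetrization argument does not apply verbatim. I would justify it either through the Ledoux-Talagrand contraction principle applied to the $1$-Lipschitz map $z\mapsto[1-z]_+$, or by running the original symmetrization argument at a measurable maximizer $w^\star(\eps,x,y)$ and invoking the independence of the Rademacher signs after conditioning.
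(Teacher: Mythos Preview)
Your proposal is correct and follows essentially the same route as the paper: apply Lemma~\ref{lemma:deviation_inequality} with $t=2\tilde\mu+u$, remove the positive parts via Lemma~\ref{lemma:bernoulli}, split into the Bernoulli and Gaussian pieces, and finish with parts (ii) and (iii) of Lemma~\ref{lemma:concentration_bernoulli_gaussian} (using $\eps_i y_i x_i\sim x_i$ and the symmetry of $K$ exactly as you indicate). In fact you are more careful than the paper on one point: the paper invokes Lemma~\ref{lemma:bernoulli} directly inside the supremum without comment, whereas you correctly flag that the index set $\{i:1-y_i\sp{x_i}{w}\ge 0\}$ depends on $w$ and propose handling it via the Ledoux--Talagrand contraction principle for the $1$-Lipschitz map $z\mapsto[1-z]_+$, which is the cleanest justification.
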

\begin{proof}
Applying Lemma \ref{lemma:deviation_inequality} and Lemma \ref{lemma:bernoulli} we get
\begin{align*}
&\phantom{\leq.}\P\biggl(\sup\limits_{w\in K}\vert f_a(w)-\E f_a(w)\vert\geq 2\mu+2\tilde\mu+u\biggr)\\
&\leq 4\P\biggl(\sup\limits_{w\in K}
\biggl\vert\frac{1}{m}\sum\limits_{i=1}^m\eps_i[1-y_i\sp{x_i}{w}]_+\biggr\vert\geq \tilde\mu+u/2\biggr)\\
&\leq 8\P\biggl(\sup\limits_{w\in K}\biggl\vert\frac{1}{m}\sum\limits_{i=1}^m\eps_i
(1-y_i\sp{x_i}{w})\biggr\vert\geq \tilde\mu+u/2\biggr)\\
&\leq 8\P\biggl(\biggl\vert\frac{1}{m}\sum\limits_{i=1}^m \eps_i\biggr\vert\geq u/4\biggr)\\
&\qquad+8\P\biggl(\sup\limits_{w\in K}\biggl\vert\biggl\langle\frac{1}{m}\sum\limits_{i=1}^m x_i,w
\biggr\rangle\biggr\vert\geq \tilde\mu+u/4\biggr).
\end{align*}
	Finally, applying the second and third part of Lemma \ref{lemma:concentration_bernoulli_gaussian} this can be further estimated from above by
	\begin{align*}
&\leq 8\biggl(\exp\biggl(\frac{-mu^2}{32}\biggr)+\exp\biggl(\frac{-mu^2}{32r^2R^2}\biggr)\biggr),
\end{align*}
which finishes the proof.
\end{proof}

Using the two Lemmas \ref{lemma:estimate_mu} and \ref{lemma:final_uniform_concentration} we can now prove 
Theorem \ref{theo:concentration_estimate}.

\begin{proof}[{\bf Proof of Theorem \ref{theo:concentration_estimate}}]
	Lemma \ref{lemma:final_uniform_concentration} yields
	\begin{align*}
		\P&\biggl(\sup\limits_{w\in K}\vert f_a(w)-\E f_a(w)\vert\geq 2\mu+2\tilde\mu+u\biggr)\notag\\
		&\qquad \leq 8\biggl(\exp\biggl(\frac{-mu^2}{32}\biggr)+\exp\biggl(\frac{-mu^2}{32r^2R^2}\biggr)\biggr).
	\end{align*}
	Using Lemma \ref{lemma:estimate_mu} we further get
	\begin{align*}
		\mu\leq \frac{4\sqrt{8\pi}+8rR\sqrt{2\log(2d)}}{\sqrt{m}}.
	\end{align*}
	Invoking the duality $\|\cdot\|_1'=\|\cdot\|_\infty$ and the first part of Lemma \ref{lemma:concentration_bernoulli_gaussian}
	we can further estimate $\tilde\mu$ by
	\begin{align*}
		\tilde\mu&
		=R\E\left\|\frac{1}{m}\sum\limits_{i=1}^m x_i\right\|_\infty
		\leq\frac{rR\sqrt{2\log(2d)}}{\sqrt{m}}.
	\end{align*}
	Hence, with probability at least
	\begin{align*}
		1-8\biggl(\exp\biggl(\frac{-mu^2}{32}\biggr)+\exp\biggl(\frac{-mu^2}{32r^2R^2}\biggr)\biggr)
	\end{align*}
	we have
	\begin{align*}
		\sup\limits_{w\in K}\vert f_a(w)-\E f_a(w)\vert
		&\leq 2\mu+2\tilde\mu+u\\
		&\leq \frac{8\sqrt{8\pi}+18rR\sqrt{2\log(2d)}}{\sqrt{m}}+u
	\end{align*}
	as claimed.
\end{proof}

\subsection{Estimate of the expected value}

In this subsection we will estimate 
\begin{align*}
	\E(f_a(w)-f_a(a))=\E[1-y\sp{x}{w}]_+-\E[1-y\sp{x}{a}]_+
\end{align*}
for some $w\in\R^d\backslash\{0\}$ with $\|w\|_1\leq R$.
We will first calculate both expected values separately and later estimate their difference.
We will make use of the following statements from probability theory.

\begin{lemma}\label{lemma:covariance}
	Let $a,x\in\R^d$ be according to \eqref{eq:assumptions-a}, \eqref{eq:assumptions-x}
	and let $w\in\R^d\backslash\{0\}$. Then it holds
	\begin{enumerate}
		\item $\sp{x}{a},~\sp{x}{\frac{w}{\|w\|_2}}\sim\mathcal{N}(0,r^2)$,
		\item $\cov(\sp{x}{a},\sp{x}{w})=r^2\sp{a}{w}$.
	\end{enumerate}
\end{lemma}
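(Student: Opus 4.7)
Both parts are direct consequences of the standing assumption $x = r\tilde x$ with $\tilde x \sim \mathcal{N}(0,\id)$ and of the basic fact that a linear functional of a standard Gaussian vector is again Gaussian.

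For part (1), the plan is to write $\langle x, a \rangle = r\langle \tilde x, a\rangle$. Since $\tilde x$ is standard Gaussian in $\R^d$, the scalar $\langle \tilde x, a\rangle = \sum_{j=1}^d a_j \tilde x_j$ is a linear combination of independent $\mathcal{N}(0,1)$ variables, hence itself centered Gaussian with variance $\|a\|_2^2 = 1$. Multiplying by $r$ rescales the variance to $r^2$, so $\langle x,a\rangle \sim \mathcal{N}(0,r^2)$. The same argument applies verbatim to $w/\|w\|_2$, which is also a unit vector in $\R^d$.

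For part (2), since both inner products have mean zero by part (1), the covariance equals the expectation of the product. I would compute
\begin{align*}
\cov(\langle x,a\rangle, \langle x,w\rangle) = \E[\langle x,a\rangle\langle x,w\rangle] = \E[a^T x x^T w] = a^T \E[xx^T]\, w.
\end{align*}
Using $x = r\tilde x$ and $\E[\tilde x \tilde x^T] = \id$, one obtains $\E[xx^T] = r^2\id$, giving $\cov(\langle x,a\rangle,\langle x,w\rangle) = r^2\langle a,w\rangle$, as claimed.

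There is no real obstacle here; the statement is essentially a rephrasing of the fact that a Gaussian vector with covariance $r^2 \id$ has univariate marginals along unit directions of variance $r^2$, and that its second moment matrix governs covariances of linear functionals.
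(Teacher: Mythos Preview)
Your proof is correct and essentially identical to the paper's: for part (1) the paper simply invokes the 2-stability of the normal distribution (which is exactly your ``linear combination of independent standard normals'' argument), and for part (2) the paper expands $\E(\sp{x}{a}\sp{x}{w})=\sum_{i,j}a_iw_j\E(x_ix_j)=r^2\sp{a}{w}$ coordinatewise, which is your matrix computation $a^T\E[xx^T]w=r^2\sp{a}{w}$ written out.
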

\begin{proof}
	The first statement is well known in probability theory as the 2-stability of normal distribution. For the second statement we get
	\begin{align*}
		\cov(\sp{x}{a},\sp{x}{w})
		&=\E(\sp{x}{a}\sp{x}{w})
		=\sum\limits_{i,j=1}^d a_iw_j\E(x_ix_j)\\
		&=r^2\sum\limits_{i=1}^da_iw_i=r^2\sp{a}{w}
	\end{align*}
	as claimed.
\end{proof}
It is very well known, cf. \cite[Corollary 5.2]{HS}, that projections of a Gaussian random vector onto two orthogonal directions are mutually independent.
\begin{lemma}\label{lemma:independent}
Let $x\sim {\mathcal N}(0,\id)$ and let $a,b\in\R^d$ with $\sp{a}{b}=0.$
Then $\sp{x}{a}$ and $\sp{x}{b}$ are independent random variables.
\end{lemma}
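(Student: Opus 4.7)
The plan is to reduce the statement to the standard fact that uncorrelated jointly Gaussian random variables are independent. The key observation is that $(\sp{x}{a},\sp{x}{b})$ is obtained from $x$ by a linear transformation and hence forms a two-dimensional centered Gaussian vector, so its distribution is completely determined by its covariance matrix.

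More concretely, I would first form the $2\times d$ matrix $M$ whose rows are $a^\top$ and $b^\top$, so that $(\sp{x}{a},\sp{x}{b})^\top = Mx$. Since $x\sim\mathcal{N}(0,\id)$, the vector $Mx$ is distributed as $\mathcal{N}(0,MM^\top)$. A direct computation gives
\begin{align*}
MM^\top=\begin{pmatrix}\|a\|_2^2 & \sp{a}{b}\\ \sp{a}{b} & \|b\|_2^2\end{pmatrix}=\begin{pmatrix}\|a\|_2^2 & 0\\ 0 & \|b\|_2^2\end{pmatrix},
\end{align*}
using the hypothesis $\sp{a}{b}=0$. Thus the covariance matrix of $(\sp{x}{a},\sp{x}{b})$ is diagonal.

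Finally, since a centered Gaussian vector in $\R^2$ with diagonal covariance has a density that factorizes as a product of two one-dimensional Gaussian densities (or, equivalently, by the standard theorem that uncorrelated jointly Gaussian components are independent, cf.\ the citation \cite{HS} mentioned in the paper), the two marginals $\sp{x}{a}$ and $\sp{x}{b}$ are independent. A minor edge case arises if $a=0$ or $b=0$, in which case one of the random variables is the deterministic constant $0$ and independence is trivial; I would handle this by a one-sentence remark at the end. The only potentially subtle step is invoking the Gaussian factorization correctly, but since this is a classical fact for which the paper already provides a reference, there is no real obstacle.
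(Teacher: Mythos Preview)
Your proposal is correct and is precisely the standard argument for this classical fact. The paper itself does not give a proof at all; it merely states the lemma as well known and cites \cite[Corollary 5.2]{HS}, so your write-up is in fact more detailed than what appears in the paper.
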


Applying these two lemmas to our case we end up with the following lemma.

\begin{lemma}\label{lemma:covariance_decomposition}
For $a\in\R^d$ according to \eqref{eq:assumptions-a}, $x\sim {\mathcal N}(0,r^2 \id)$ and $w\in\R^d$ we have
\begin{align*}
\sp{x}{w}=c\sp{x}{a}+c'Z
\end{align*}
for some $Z\sim\mathcal{N}(0,r^2)$ independent of $\sp{x}{a}$ and
\begin{align}\label{eq:constant_c}
c=\sp{a}{w},\quad c'=\sqrt{\|w\|_2^2-c^2}.
\end{align}
\end{lemma}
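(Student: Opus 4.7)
The plan is to orthogonally decompose $w$ with respect to the unit vector $a$ and read off the two claims directly from Lemmas \ref{lemma:covariance} and \ref{lemma:independent}. Since $\|a\|_2=1$, write
\[
w = \sp{a}{w}\,a + v, \qquad v := w - \sp{a}{w}\,a,
\]
so that $\sp{a}{v}=0$ and, by Pythagoras, $\|v\|_2^2 = \|w\|_2^2 - \sp{a}{w}^2 = (c')^2$.

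If $c'=0$, then $v=0$ and $w = ca$, so $\sp{x}{w} = c\sp{x}{a}$; the representation in the lemma holds by taking $Z$ to be any $\mathcal{N}(0,r^2)$ random variable defined on an enlarged probability space and independent of $\sp{x}{a}$, with the coefficient $c'=0$ annihilating it.

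In the main case $c'>0$, set $b := v/c'$. Then $\|b\|_2=1$ and $\sp{a}{b}=0$, and the decomposition becomes
\[
\sp{x}{w} \;=\; c\sp{x}{a} + c'\sp{x}{b}.
\]
Define $Z := \sp{x}{b}$. Applying part (1) of Lemma \ref{lemma:covariance} with $b$ in place of $w$ (note $b$ is a unit vector) yields $Z\sim\mathcal{N}(0,r^2)$. For independence, write $x = r\tilde x$ with $\tilde x\sim \mathcal{N}(0,\id)$ so that $\sp{x}{a} = r\sp{\tilde x}{a}$ and $Z = r\sp{\tilde x}{b}$; since $\sp{a}{b}=0$, Lemma \ref{lemma:independent} gives that $\sp{\tilde x}{a}$ and $\sp{\tilde x}{b}$ are independent, and therefore so are their scalar multiples $\sp{x}{a}$ and $Z$. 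This produces exactly the representation claimed, with $c$ and $c'$ as in \eqref{eq:constant_c}.

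There is essentially no obstacle here: the whole content is the orthogonal decomposition of $w$ and the rotation invariance (independence on orthogonal directions) of the isotropic Gaussian, both of which are already available from the preceding lemmas. The only minor bookkeeping point is the degenerate case $c'=0$, handled above by appealing to an enlarged probability space.
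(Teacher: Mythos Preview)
Your proof is correct and is essentially the same as the paper's: both define $Z=\frac{1}{c'}\sp{x}{w-ca}$ (you write this as $\sp{x}{b}$ with $b=(w-ca)/c'$), verify $Z\sim\mathcal{N}(0,r^2)$, and invoke Lemma~\ref{lemma:independent} via the orthogonality $\sp{a}{w-ca}=0$. The only cosmetic difference is that you introduce the orthogonal decomposition of $w$ up front and cite Lemma~\ref{lemma:covariance}(1) for the distribution of $Z$, whereas the paper computes $\E Z$ and $\var Z$ directly; your treatment of the degenerate case $c'=0$ via an enlarged probability space is also a bit more explicit than the paper's ``trivially''.
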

\begin{remark}
	Note that $c'$ is well defined, since $c^2\leq \|w\|_2^2\|a\|_2^2=\|w\|_2^2$.
\end{remark}
\begin{proof}
If $c'=0$, the statement holds trivially.
If $c'\not=0$, we set
\begin{align*}
Z=\frac{1}{c'}(\sp{x}{w}-c\sp{x}{a})=\frac{1}{c'}\sum\limits_{i=1}^d x_i\left(w_i-ca_i\right).
\end{align*}
Hence, $Z$ is indeed normally distributed with $\E(Z)=0$ and $\var(Z)=r^2$. It remains to show that $Z$ and $\sp{x}{a}$
are independent. 
We observe that
\begin{align*}
\sp{a}{w-ca}=\sp{a}{w}-\sp{a}{w}\|a\|_2=0
\end{align*}
and, finally, Lemma \ref{lemma:independent} yields the claim.
\end{proof}

\begin{lemma}\label{lemma:Psi_aa}
Let $a\in\R^d$ and $f_a\colon\R^d\to\R$ be according to \eqref{eq:assumptions-a}, \eqref{eq:f_a}. Then it holds
\begin{enumerate}
\item $\E f_a(a)
=\frac{1}{\sqrt{2\pi}}\int_\R\big[1-r\vert t\vert\big]_+ e^{\frac{-t^2}{2}}\,dt,
$
\item $\E f_a(w)=\frac{1}{2\pi}\int_{\R^2}\big[1-cr\vert t_1\vert-c'rt_2\big]_+e^{\frac{-t_1^2-t_2^2}{2}}\,dt_1\,dt_2$,
where $c$ and $c'$ are defined by \eqref{eq:constant_c}.
\end{enumerate}
\end{lemma}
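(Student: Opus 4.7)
The plan is to reduce both expectations to integrals against standard Gaussian densities, using i.i.d.\ reduction in the sum defining $f_a$ together with the decomposition provided by Lemma \ref{lemma:covariance_decomposition}. Since the pairs $(x_i,y_i)$ are i.i.d.\ copies of $(x,y)$ with $x=r\tilde x$, $\tilde x\sim\mathcal{N}(0,\id)$, and $y=\sgn(\sp{x}{a})$, linearity of expectation immediately gives
\begin{align*}
\E f_a(w)=\E[1-y\sp{x}{w}]_+,
\end{align*}
so it suffices to compute this single-copy expectation.

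For part (i), I would substitute $w=a$, observe that $y\sp{x}{a}=|\sp{x}{a}|$, and use $\|a\|_2=1$ to conclude $\sp{\tilde x}{a}\sim\mathcal{N}(0,1)$. Setting $T=\sp{\tilde x}{a}$, we have $|\sp{x}{a}|=r|T|$, and writing the expectation of $[1-r|T|]_+$ as an integral against the standard Gaussian density yields the claimed formula.

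For part (ii), I would invoke Lemma \ref{lemma:covariance_decomposition} to write $\sp{x}{w}=c\sp{x}{a}+c'Z$ with $Z\sim\mathcal{N}(0,r^2)$ independent of $\sp{x}{a}$. Then
\begin{align*}
y\sp{x}{w}=\sgn(\sp{x}{a})(c\sp{x}{a}+c'Z)=c|\sp{x}{a}|+c'\sgn(\sp{x}{a})Z.
\end{align*}
After normalizing $T_1=\sp{x}{a}/r\sim\mathcal{N}(0,1)$ and $T_2'=Z/r\sim\mathcal{N}(0,1)$ (independent of $T_1$), I would set $T_2=\sgn(T_1)T_2'$ and claim that $(T_1,T_2)$ is a pair of independent standard Gaussians. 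Once this is justified, the expectation of $[1-cr|T_1|-c'rT_2]_+$ becomes the stated double integral against the product of two standard Gaussian densities.

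The main subtlety is the last step: one must check that $T_2$ is not merely marginally standard Gaussian but actually independent of $T_1$. I would verify this by conditioning on $T_1$: since $T_2'$ is independent of $T_1$ and symmetric, the conditional law of $\sgn(T_1)T_2'$ given $T_1$ is $\mathcal{N}(0,1)$ regardless of the value of $T_1$, which gives both the correct marginal and independence. Apart from this symmetrization detail, the argument is a direct computation using Lemmas \ref{lemma:covariance}--\ref{lemma:covariance_decomposition}.
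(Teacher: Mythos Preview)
Your proposal is correct and follows essentially the same route as the paper: reduce to a single copy of $(x,y)$, use $y\sp{x}{a}=|\sp{x}{a}|$ for part~(i), and for part~(ii) apply Lemma~\ref{lemma:covariance_decomposition} and then replace $\sgn(\sp{x}{a})Z$ by $Z$ via Gaussian symmetry. In fact you justify the symmetrization step (that $(T_1,\sgn(T_1)T_2')$ is an independent standard Gaussian pair) more explicitly than the paper, which simply writes $\E[1-c|\sp{x}{a}|-c'\sgn(\sp{x}{a})Z]_+=\E[1-c|\sp{x}{a}|-c'Z]_+$ in one line.
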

\begin{proof}
\begin{enumerate}
\item Let $\omega\sim{\mathcal N}(0,1)$
and use the first part of Lemma \ref{lemma:covariance} to obtain
\begin{align*}
\E f_a(a)&=\E[1-\vert \sp{x}{a}\vert]_+=\E[1-r\vert\omega\vert]_+\\
&=\frac{1}{\sqrt{2\pi}}\int_\R\big[1-r\vert t\vert\big]_+ e^{\frac{-t^2}{2}}\,dt.
\end{align*}
\item Using the notation of Lemma \ref{lemma:covariance_decomposition}
we get
\begin{align*}
\E f_a(w)
&=\E[1-\sgn(\sp{x}{a})\sp{x}{w}]_+\\
&=\E[1-\sgn(\sp{x}{a})(c\sp{x}{a}+c'Z)]_+\\
&=\E[1-c\vert\sp{x}{a}\vert-c'\sgn(\sp{x}{a})Z]_+\\
&=\E[1-c\vert\sp{x}{a}\vert-c'Z]_+\\
&=\frac{1}{2\pi}\int_{\R^2}[1-cr\vert t_1\vert-c'rt_2]_+e^{\frac{-t_1^2-t_2^2}{2}}\,dt_1\,dt_2.
\end{align*}
\end{enumerate}
\end{proof}

Using this result we now can prove Theorem \ref{theo:Psi-Psi}.

\begin{proof}[{\bf Proof of Theorem \ref{theo:Psi-Psi}}]
Using Lemma \ref{lemma:Psi_aa} we first observe
\begin{align}\label{eq:estimate_Ef_a(a)}
-\pi\E f_a(a)
&=-\frac{\sqrt{\pi}}{\sqrt{2}}\int_{\R}[1-r|t|]_+e^{\frac{-t^2}{2}}\,dt\\
&=-\sqrt{2\pi}\int_0^{\frac{1}{r}}\big(1-rt\big)e^{\frac{-t^2}{2}}\,dt\notag\\
&\ge-\sqrt{2\pi}\int_0^{\frac{1}{r}}e^{\frac{-t^2}{2}}\,dt
\notag
\geq-\frac{\sqrt{2\pi}}{r}.
\end{align}
	To estimate the expected value of $f_a(w)$ we now distinguish the two cases $c\leq 0$ and $c>0$.\\
\emph{1. case: $c\leq0$:} In that case we get 
	\begin{align*}
		\pi\E f_a(w)
		&=\int_\R\int_0^\infty\big[1-crt_1-c'rt_2\big]_+e^{\frac{-t_1^2-t_2^2}{2}}\,dt_1\,dt_2.
	\end{align*}
	Since $-crt_1\geq0$ for $0\leq t_1<\infty$ we can further estimate
	\begin{align*}
		&\pi\E f_a(w)\geq\int_\R\int_0^\infty\big[1-c'rt_2\big]_+e^{\frac{-t_1^2-t_2^2}{2}}\,dt_1\,dt_2\\
		&\geq\int_{-\infty}^0\int_0^\infty(1-c'rt_2)e^{\frac{-t_1^2-t_2^2}{2}}\,dt_1\,dt_2\\
		&=\int_{-\infty}^0\int_0^\infty e^{\frac{-t_1^2-t_2^2}{2}}\,dt_1\,dt_2
			+c'r\int_0^\infty\int_0^\infty t_2e^{\frac{-t_1^2-t_2^2}{2}}\,dt_1\,dt_2\\
		&=\frac{\pi}{2}+c'r\frac{\sqrt{\pi}}{\sqrt{2}}.
	\end{align*}
	As claimed, putting both terms together, we arrive at
	\begin{align*}
		\pi\E(f_a(w)-f_a(a))
		\geq\frac{\pi}{2}+c'r\frac{\sqrt{\pi}}{\sqrt{2}}-\frac{\sqrt{2\pi}}{r}.
	\end{align*}
\emph{2. case: $c>0$:} First let us observe that 
$1-crt_1-c'rt_2\ge 0$ on $[0,1/cr]\times(-\infty,0]\subset\R^2$.
Hence, we get
\begin{align*}
	\pi\E f_a(w)
	&=\int_{\R^2}[1-crt_1-c'rt_2]_+e^{\frac{-t_1^2-t_2^2}{2}}\,dt_2\,dt_1\\
	&\geq\int_0^{\frac{1}{cr}}\int_{-\infty}^0(1-crt_1-c'rt_2)e^{\frac{-t_1^2-t_2^2}{2}}\,dt_2\,dt_1\\
	&=\frac{\sqrt{\pi}}{\sqrt{2}}\int_0^{\frac{1}{cr}}(1-crt)e^{\frac{-t^2}{2}}\,dt
	+c'r\int_0^{\frac{1}{cr}}e^{\frac{-t^2}{2}}\,dt\\
	&\geq \frac{\sqrt{\pi}}{\sqrt{2}}\int_0^{\frac{1}{cr}}(1-crt)e^{\frac{-t^2}{2}}\,dt
	+\frac{c'}{c}\exp\left(\frac{-1}{2c^2r^2}\right).
\end{align*}
	Combining this estimate with \eqref{eq:estimate_Ef_a(a)} we arrive at
	\begin{align*}
		\pi\E(f_a(w)-f_a(a))
		&\geq\frac{\sqrt{\pi}}{\sqrt{2}}\int_0^{\frac{1}{cr}}(1-crt)e^{\frac{-t^2}{2}}\,dt\\
			&\qquad+\frac{c'}{c}\exp\left(\frac{-1}{2c^2r^2}\right)-\frac{\sqrt{2\pi}}{r}.
	\end{align*}
\end{proof}

\section{$\l_1$-SVM with additional $\l_2$-constraint}\label{sec:4}

A detailed inspection of the analysis done so far shows that it would be convenient if the convex body $K$ would not
include vectors with large $\ell_2$-norm. For example, in \eqref{eq:mot:12} we needed to calculate $\sup_{w\in K}\|w\|_2^2=R^2$,
although the measure of the set of vectors in $K$ with $\ell_2$-norm close to $R$ is extremely small.

Therefore, we will modify the $\l_1$-SVM \eqref{eq:l1-svm} by adding an additional $\l_2$-constraint, that is instead of \eqref{eq:l1-svm} we consider the optimization problem
\begin{align}\label{eq:l2-l1-svm}
	\min\limits_{w\in\R^d}\sum\limits_{i=1}^m[1-y_i\sp{x_i}{w}]_+\ \text{s. t.}\ \|w\|_1\leq R~\text{and}~\|w\|_2\leq 1.
\end{align}
The combination of $\ell_1$ and $\ell_2$ constraints is by no means new - for example, it plays a crucial role
in the theory of elastic nets \cite{ZH}.
Furthermore, let us remark that the set
\begin{align}\label{eq:tildeK}
	\tilde K=\{w\in\R^d\mid \|w\|_1\leq R~\text{and}~\|w\|_2\leq 1\}
\end{align}
appears also in \cite{PV2}. We get $\tilde K\subset K$ with $K$ according to \eqref{eq:K}. Hence, Theorem \ref{theo:concentration_estimate}
and \eqref{eq:main_idea} still remain true if we replace $K$ by $\tilde K$ and we obtain
\begin{align}\label{eq:l2-concentration}
			\sup\limits_{w\in\tilde K}\vert f_a(w)-\E f_a(w)\vert
			\leq \frac{8\sqrt{8\pi}+18rR\sqrt{2\log(2d)}}{\sqrt{m}}+u
\end{align}
with high probability and
\begin{align}\label{eq:l2-main-idea}
		\E(f_a(\hat a)-f_a(a))
		\leq 2\sup\limits_{w\in\tilde K}\vert f_a(\hat a)-f_a(a)\vert,
\end{align}
where $\hat a$ is now the minimizer of \eqref{eq:l2-l1-svm}.

It remains to estimate the expected value $\E(f_a(w)-f_a(a))$ in order to obtain an analogue of
Theorem \ref{theo:main_theorem} for \eqref{eq:l2-l1-svm}, which reads as follows. 

\begin{theo}\label{theo:l2_main_theorem}
	Let $d\ge 2$, $0<\eps<1/2$, $r>2\sqrt{2\pi}(1-2\eps)^{-1}$, $a\in\R^d$ according to \eqref{eq:assumptions-a}, $m\geq C\eps^{-2}r^2R^2\log(d)$
	for some constant $C$, $x_1,\ldots,x_m\in\R^d$ according to \eqref{eq:assumptions-x} and $\hat a\in\R^d$
	a minimizer of \eqref{eq:l2-l1-svm}. Then it holds
	\begin{align}\label{eq:l2-l1-est}
	\|a-\hat a\|_2^2\leq \frac{C'\eps}{r(1-\exp\left(\frac{-1}{2r^2}\right))}
	\end{align}
	with probability at least
	\begin{align*}
		1-\gamma\exp\left(-C''\log(d)\right)
	\end{align*}
	for some positive constants $\gamma,C',C''$.
\end{theo}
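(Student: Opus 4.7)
The plan is to follow the strategy of Theorem \ref{theo:main_theorem}, with the only substantive change being a tighter lower bound on $\E(f_a(\hat a)-f_a(a))$ that takes advantage of the new $\ell_2$-constraint. First, choosing $u=\frac{rR\sqrt{2\log(2d)}}{\sqrt{m}}$ and $m$ as in the proof of Theorem \ref{theo:main_theorem} (this is legitimate because $\tilde K\subset K$), the concentration bound \eqref{eq:l2-concentration} together with \eqref{eq:l2-main-idea} yields $\E(f_a(\hat a)-f_a(a))\le\eps$ on an event of the probability stated in \eqref{eq:prob:1}. The substance of the proof lies in the lower bound on this expectation.

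The key idea is to replace Theorem \ref{theo:Psi-Psi}, which is not tight enough near $w=a$ to give a quadratic bound on $\|a-\hat a\|_2^2$, by a direct convexity argument. The map $w\mapsto\E f_a(w)$ is convex, as an expectation of convex hinge losses, and differentiable, because the Gaussian density of $x$ makes the non-differentiable set of $[\,\cdot\,]_+$ null. Differentiating under the integral and evaluating at $w=a$, where $\sgn(\sp{x}{a})\sp{x}{a}=|\sp{x}{a}|$, gives
\[
\nabla\E f_a(a) = -\E\bigl[\sgn(\sp{x}{a})\,x\,\mathbf{1}_{|\sp{x}{a}|<1}\bigr].
\]
Writing $x=\sp{x}{a}\,a+x^\perp$ with $x^\perp$ in the orthogonal complement of $a$, Lemma \ref{lemma:independent} implies that $x^\perp$ is independent of $\sp{x}{a}$; the perpendicular component therefore averages to zero, and only the component along $a$ survives:
\[
\nabla\E f_a(a) = -\rho\,a,\qquad \rho := \E\bigl[|\sp{x}{a}|\,\mathbf{1}_{|\sp{x}{a}|<1}\bigr] = r\sqrt{2/\pi}\bigl(1-e^{-1/(2r^2)}\bigr).
\]
Convexity then yields $\E(f_a(\hat a)-f_a(a))\ge \sp{\nabla\E f_a(a)}{\hat a-a}=\rho\,(1-c)$, where $c=\sp{a}{\hat a}$.

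To convert this into a bound on $\|a-\hat a\|_2^2$, I would invoke the $\ell_2$-constraint. Since $\|a\|_2=1$ and $\|\hat a\|_2\le 1$, the identity
\[
\|a-\hat a\|_2^2 = 1-2c+\|\hat a\|_2^2 \le 2(1-c)
\]
holds, and combining with $\rho(1-c)\le\eps$ gives $\|a-\hat a\|_2^2 \le 2\eps/\rho = \sqrt{2\pi}\,\eps\,\bigl(r(1-e^{-1/(2r^2)})\bigr)^{-1}$, which is \eqref{eq:l2-l1-est} with $C'=\sqrt{2\pi}$. The hypothesis $r>2\sqrt{2\pi}(1-2\eps)^{-1}$ is used (via the $c\le 0$ case of Theorem \ref{theo:Psi-Psi}) to exclude the degenerate regime in which the estimate would otherwise be uninformative. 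The principal technical obstacle I anticipate is the justification of the gradient computation---differentiation under the expectation, together with the orthogonal decomposition of $x$ and Lemma \ref{lemma:independent} to kill the perpendicular component---after which everything combines routinely with the machinery of Section \ref{sec:3}.
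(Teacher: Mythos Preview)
Your proof is correct and arrives at the same inequality as the paper, but by a genuinely different route. The paper does not compute the gradient of $w\mapsto\E f_a(w)$ or invoke convexity abstractly; instead it works directly with the integral representations of Lemma~\ref{lemma:Psi_aa}, writes both $\E f_a(\hat a)$ and $\E f_a(a)$ as double integrals over $(t_1,t_2)\in[0,\infty)\times\R$, and then uses the elementary inequalities $[x]_+\ge x$ and $[1-rt_1]_+=1-rt_1$ on $t_1\in[0,1/r]$ (and nonnegativity of the integrand for $t_1>1/r$) to obtain
\[
\pi\,\E(f_a(\hat a)-f_a(a))\;\ge\;\int_0^{1/r}\!\!\int_\R\bigl((1-crt_1-c'rt_2)-(1-rt_1)\bigr)e^{-(t_1^2+t_2^2)/2}\,dt_2\,dt_1
\;=\;r(1-c)\sqrt{2\pi}\bigl(1-e^{-1/(2r^2)}\bigr).
\]
This is exactly your bound $\rho(1-c)$: the paper's linearization of the hinge loss on the support of $[1-rt_1]_+$ is, in disguise, the first-order convexity inequality at $a$ that you invoke, so the two approaches are equivalent at the level of the estimate they produce. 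Your argument is cleaner conceptually and makes the role of convexity explicit, at the price of justifying differentiation under the expectation and the orthogonal-decomposition step; the paper's argument is more hands-on but entirely elementary once Lemma~\ref{lemma:Psi_aa} is in place. One small remark: your convexity bound $\E(f_a(\hat a)-f_a(a))\ge\rho(1-c)$ and the inequality $\|a-\hat a\|_2^2\le 2(1-c)$ hold regardless of the sign of $c$, so the appeal to the $c\le 0$ case of Theorem~\ref{theo:Psi-Psi} is not actually needed in your argument (the paper mentions $c>0$ only by analogy with the proof of Theorem~\ref{theo:main_theorem}, and its integral computation does not use it either).
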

\begin{remark}
\begin{enumerate}
\item As for Theorem \ref{theo:main_theorem} we can write down the expressions explicitly, 
	i.e. without the constants $\gamma, C,C'$ and $C''$. That is, taking
	$m\geq 4\eps^{-2}\left(8\sqrt{8\pi}+(18+t)rR\sqrt{2\log(2d)}\right)^2$ for some $t>0$, we get
	\begin{align*}
		\|a-\hat a\|_2^2\leq \frac{\sqrt{\pi/2}\ \eps}{r\left(1-\exp\left(\frac{-1}{2r^2}\right)\right)}.
	\end{align*}
	with probability at least
	\begin{align*}
		1-8\biggl(\exp\biggl(\frac{-t^2r^2R^2\log(2d)}{16}\biggr)+\exp\biggl(\frac{-t^2\log(2d)}{16}\biggr)\biggr).
	\end{align*}
\item The main advantage of Theorem \ref{theo:l2_main_theorem} compared to Theorem \ref{theo:main_theorem} is that the parameter $r$ does not need to grow
to infinity. Actually, \eqref{eq:l2-l1-est} is clearly not optimal for large $r$. Indeed, if (say) $\eps<0.2$, we can take $r=10$, and obtain
$$
\|a-\hat a\|_2^2\le \tilde C'\eps
$$
for $m\ge\tilde C\eps^{-2}R^2\log(d)$ with high probability.
\end{enumerate}
\end{remark}
\begin{proof}
	As in the proof of Theorem \ref{theo:main_theorem} we first obtain 
	$c'=\sqrt{\|\hat a\|_2^2-\sp{a}{\hat a}^2}>0$ and $c=\sp{a}{\hat a}>0$. Using Lemma \ref{lemma:Psi_aa}
	we get
	\begin{align*}
		\pi&\E(f_a(w)-f_a(a))\\
		&\geq\int_0^{\frac{1}{r}}\int_\R\big((1-crt_1-c'rt_2)-(1-rt_1)\big)e^{\frac{-t_1^2-t_2^2}{2}}\,dt_2\,dt_1\\
		&=r(1-c)\sqrt{2\pi}\int_0^{\frac{1}{r}}te^{\frac{-t^2}{2}}\,dt
	\end{align*}
	with
	\begin{align*}
		1-c=1-\sp{a}{\hat a}\geq \frac{1}{2}(\|a\|_2^2+\|\hat a\|_2^2)-\sp{a}{\hat a}=\frac{1}{2}\|a-\hat a\|_2^2.
	\end{align*}
	The claim now follows from \eqref{eq:l2-main-idea} and \eqref{eq:l2-concentration}.
\end{proof}

\section{Numerical experiments}\label{sec:5}

We performed several numerical tests to exhibit different aspects of the algorithms discussed above.
In the first two parts of this section we fixed $d=1000$ and set $\tilde a\in\R^d$ with 5 nonzero entries $\tilde a_{10}=1$, $\tilde a_{140}=-1$, $\tilde a_{234}=0.5$, 
$\tilde a_{360}=-0.5$, $\tilde a_{780}=0.3$,
Afterwards we normalized $\tilde a$ and set $a=\tilde a/\|\tilde a\|_2$ and $R=\|a\|_1$. 

\subsection{Dependency on $r$}

We run the $\ell_1$-SVM \eqref{eq:l1-svm} with $m=200$ and $m=400$ for different values of $r$ between zero and 1.5. The same was done for the $\ell_1$-SVM
with the additional $\ell_2$-constraint \eqref{eq:l2-l1-svm}, which is called $\ell_{1,2}$-SVM in the legend of the figure.
The average error of $n=20$ trials between $a$ and $\hat a/\|\hat a\|_2$ is plotted against $r$.
We observe that especially for small $r$'s the $\ell_1$-SVM with $\ell_2$-constraint performs much better than classical $\ell_1$-SVM.
\begin{figure}[ht]
	\centering
	\includegraphics[width=8cm]{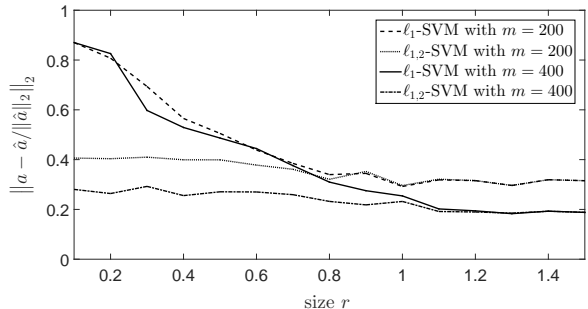}
	\caption{Dependency on $r$}
	\label{fig:dep_r}
\end{figure}

\subsection{Dependency on $m$ and comparison with 1-Bit CS}

In the second experiment, we run $\ell_1$-SVM with and without the extra $\ell_2$-constraint for two different values of $r$,
namely for $r=0.75$ and for $r$ depending on $m$ as $r=\sqrt{m}/30$. We plotted the average error of $n=40$ trials for each value.
The last method used is 1-bit Compressed Sensing \cite{PV2}, which is given as the maximizer of
\begin{align}\label{eq:1-Bit_CS}
	\max\limits_{w\in\R^d}\sum\limits_{i=1}^m y_i\sp{x_i}{w}\quad\text{subject to}\quad
	\|w\|_2\leq 1,~\|w\|_1\leq R.
\end{align}
Note that maximizer of \eqref{eq:1-Bit_CS} is independent of $r$, since it is linear in $x_i$.
\begin{figure}[ht]
	\centering
	\includegraphics[width=8cm]{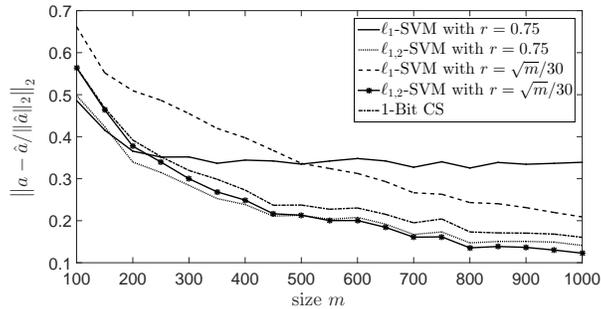}
	\caption{Comparison of $\l_1$-SVM with 1-Bit CS.}
	\label{fig:compare}
\end{figure}
First, one observes that the error of $\ell_1$-SVM does not converge to zero if the value of $r=0.75$ is fixed.
This is in a good agreement with Theorem \ref{theo:main_theorem} and the error estimate \eqref{eq:main_theorem}.
This drawback disappears when $r=\sqrt{m}/30$ grows with $m$, but $\ell_1$-SVM still performs quite badly.
The two versions of $\ell_{1,2}$-SVM perform essentially better than $\ell_1$-SVM, and slightly better than 1-bit Compressed Sensing.

\subsection{Dependency on $d$}
In figure \ref{fig:dep_d} we investigated the dependency of the error of $\l_1$-SVM on the dimension $d$.
We fixed the sparsity level $s=5$ and for each $d$ between $100$ and $3000$ we draw an $s$-sparse signal $a$ and measurement vectors $x_i$ at random.
Afterwards we run the $\l_1$-SVM with the three different values $m=m_i\log(d)$ with $m_1=10$, $m_2=20$ and $m_3=40$.
We plotted the average errors between $a$ and $\hat a/\|\hat a\|_2$ for $n=60$ trials.

	\begin{figure}[ht]
		\centering
		\includegraphics[width=8cm]{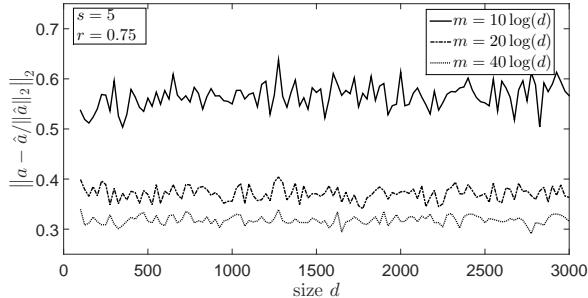}
		\caption{Dependency on $d$.}
		\label{fig:dep_d}
	\end{figure}
We indeed see that to achieve the same error, the number of measurements only needs to grow logarithmically in $d$, explaining
once again the success of $\ell_1$-SVM for high-dimensional classification problems.

\section{Discussion}

In this paper we have analyzed the performance of $\ell_1$-SVM \eqref{eq:l1-svm} in recovering sparse classifiers.
Theorem \ref{theo:main_theorem} shows, that a good approximation of such a sparse classifier can be achieved with small number
of learning points $m$ if the data is well spread. The geometric properties of well distributed learning points
are modelled by independent Gaussian vectors with growing variance $r$ and it would be interesting to know,
how $\ell_1$-SVM performs on points chosen independently from other distributions. The number of learning points
needs to grow logarithmically with the underlying dimension $d$ and linearly with the sparsity of the classifier.
On the other hand, the optimality of the dependence of $m$ on $\varepsilon$ and $r$ remains open.
Another important question left open is the behavior of $\ell_1$-SVM in the presence of missclasifications,
i.e. when there is a (small) probability that the signs $y_i\in\{-1,+1\}$ do not coincide with $\sgn(\langle x_i,a\rangle)$.
Finally, we proposed a modification of $\ell_1$-SVM by incorporating an additional $\ell_2$-constraint.

\section*{Acknowledgment}
\addcontentsline{toc}{section}{Acknowledgment}

We would like to thank A. Hinrichs, M. Omelka, and R. Vershynin for valuable discussions.

\begin{IEEEbiographynophoto}{Anton Kolleck}
received his M.S. in Mathematics at Technical University Berlin, Germany in 2013, where he now continues as Ph.D. student.
His research concentrates on sparse recovery and compressed sensing and their applications in approximation theory.
\end{IEEEbiographynophoto}

\begin{IEEEbiographynophoto}{Jan Vyb\'\i ral}
received his M.S. in Mathematics at Charles Univeristy, Prague, Czech Republic in 2002.
He earned the Dr. rer. nat. degree in Mathematics at Friedrich-Schiller University, Jena, Germany in 2005.
He had postdoc positions in Jena, Austrian Academy of Sciences, Austria, and Technical University Berlin, Germany.
He is currently an Assistant Professor of Mathematics at Charles University. His core interests are in functional analysis
with applications to sparse recovery and compressed sensing.
\end{IEEEbiographynophoto}
\vfill
\end{document}